\documentclass{article}
\usepackage[T1]{fontenc}
\usepackage{graphicx}
\usepackage{color}
\usepackage{hyperref}

\usepackage{algorithm}
\usepackage{algorithmic}
\usepackage{amsfonts}
\usepackage{amsmath}
\usepackage{amsthm}
\usepackage{amssymb}
\usepackage{booktabs}
\usepackage{colortbl}
\usepackage{float}
\usepackage{multirow}
\usepackage{subcaption}
\usepackage{url}           
\usepackage{enumitem}
\newtheorem{theorem}{Theorem}
\newtheorem{lemma}{Lemma}
\newtheorem{remark}{Remark}
\newtheorem{assumption}{Assumption}
\newtheorem{definition}{Definition}

\usepackage[english]{babel}

\usepackage[textsize=tiny]{todonotes}

\usepackage{amsmath}
\usepackage{graphicx}

\newsavebox{\btcMLEtablebox}

\usepackage{tikz}
\usetikzlibrary{arrows,shapes,automata,petri,positioning,calc}
\tikzset{
    place/.style={
        circle,
        thick,
        draw=black,
        fill=gray!50,
        minimum size=6mm,
    },
        state/.style={
        circle,
        thick,
        draw=blue!75,
        fill=blue!20,
        minimum size=6mm,
    },
}

\title{Performance Guarantees\\ for Permissionless Lending}

\author{Nanqiao Wang\thanks{email: wangnanqiao2022@gmail.com}~ and Jia Yuan Yu\thanks{email: jy@osore.ca}}

\date{}
\begin{document}
\maketitle

\begin{abstract}
We propose a lending protocol with far fewer risks than existing pool-based lending protocols (e.g., Aave, Compound). In our protocol, lenders and borrowers embody the spirit of decentralization by making competing interest-rate offers, resulting in the discovery of the prevailing interest rate through  t\^atonnement-like dynamics as opposed to taking rates partly determined by a central authority (e.g., Aave, Federal Reserve).
We eliminate the risk of manipulation by creating loans with a fixed interest rate throughout the loan---like US Treasury bills.
In particular, we further derive performance guarantees probabilities of repayment shortfall and liquidation by modeling the value of the collateral as a random walk, thus allowing participants to more efficiently manage their risk. To do so, we use characteristic functions and Fourier inversion to derive exact integral representations of the liquidation and repayment shortfall probabilities under the most general assumptions.
We illustrate empirically 
this method by using Bitcoin as collateral and fitting its historical prices to Gaussian, Student-$t$, and symmetric stable Paretian distributions.
\end{abstract}




\section{Introduction}

Risk-averse decision-makers naturally allocate capital across multiple assets, some of which are less liquid (e.g., wrapped tokens, staked tokens, BTC, tokenized RWAs) than others (e.g., stablecoins, US treasury bills).
When these decision-makers come across some information and need quick access to liquid assets, they often resort to using the less liquid asset as collateral to borrow liquid assets. However, existing permissionless lending protocols (e.g., Aave, Compound) have two major weaknesses. First, they are tremendously opaque and lack guarantees on risks of liquidation and repayment shortfall. Second, they present extraordinary risks such as arbitrarily-set interest rates, liquidation attacks by manipulation of utilization rates, front-running, withdrawal bottleneck \cite{rseth}, and overly conservative metrics for valuating the collateral.

We propose a permissionless asynchronous marketplace for loans that addresses these two weaknesses. To address the second weakness, this marketplace operates in a simpler fashion---like Facebook Marketplace and Prosper---where anyone can offer to lend out an amount of liquid assets for a fixed duration and at a fixed interest rate.
Moreover, anyone can take up an existing offer by putting down an appropriate collateral, and vice versa.
Participants continuously compete by looking at each others’ offers and updating their own offers.
Loans are concluded in an asynchronous fashion---on a first-come-first-served basis like order books---without the strict timing of auctions.
Since all parties are trustless, loans are over-collateralized. Every borrower must put down a collateral that exceeds in value what is borrowed. 
Just as stablecoins are characterized by their predictable exchange rates, our loans are characterized by fixed interest rates impervious to third-party manipulation.

To address the first weakness, we provide very general performance guarantees for participants in our protocol.
It goes without saying that, just as serious blockchains require safety and liveness guarantees \cite{gkl}, a useful lending protocol that handles billions of dollars of value must come with performance guarantees. 
Although we would like to guarantee that participating in our protocol is more efficient than in other decentralized finance protocols---e.g., staking \cite{beccuti2025formalframeworkethereumstaking}, minting algorithmic stablecoins  \cite{strohmeyer2024improvingdefimechanismsdynamic}, or swapping on decentralized exchanges \cite{he2025arbitragedecentralizedexchanges},  such a strong guarantee would inevitably require strong assumptions.
Instead, we settle for guaranteeing that, with high probability, a loan will lead to an outcome (repayment with interest) that is agreeable to both the lender and the borrower. This is possible as long as the  value of the collateral follows some particular trajectories.

The importance of guarantees is often overlooked in the design of decentralized finance protocols. However, just like the safety and liveness guarantees for ledger-type and virtual-machine-type blockchains are central to their success, similar guarantees are absolutely essential if a permissionless protocol is to manage billions of dollars of value.
Despite the extent of use of existing lending pools, some features are still quite counterintuitive.
For instance, to avoid liquidation, increasing collateral may be more effective than repaying a loan \cite{bl25}.
Moreover, defaulting on a loan is sometimes the best choice when the health factor is too low.
These surprising insights highlight the importance of giving performance guarantees to users of lending protocols.

Lending protocols serve two asymmetric sets of users: lenders requiring a guarantee of repayment with interest, and borrowers requiring the return of their collateral.
The lender requires a guarantee that most of the time, it will walk away from the loan with the principal and interest, with the exception of rare events, whose probability is controlled by the volatility of the collateral valuation.
The borrower requires a guarantee that most of the time, the loan concludes with the return of the collateral minus repayment and interest, and with the exception of rare events.
Our protocol makes available to the lender and borrower a set of actions (liquidation and top-up) that control the probability of these rare events.

Our analysis begins by formally specifying the participants, actions, states, and state-transition rules of the proposed lending protocol in Section~\ref{sec:marketplace}. Under i.i.d. Gaussian returns, we derive exact expressions for borrower liquidation and lender repayment shortfall probabilities in Section~\ref{sec:guarantee}. In Section~\ref{sec:non_gaussian_liquidation},
we first express the liquidation time as a first-passage time of a nonnegative running maximum. We follow the approach of \cite{ohgren2001remark,petrella2004numerical} and  use Spitzer's Theorem \cite{spitzer1956combinatorial}
and Pinelis' Integral Formula \cite[Eq.~16]{pinelis2015characteristic} to derive a sequence of convolutions that computes 
this running maximum's characteristic function from the return distribution.
Finally, we apply Gil-Pelaez Fourier Inversion \cite{gilpelaez1951inversion}
on this characteristic function to derive the liquidation probability. For the lender repayment probability, we combine the joint characteristic function recursion of \cite{petrella2004numerical} with Fourier inversion to derive an exact integral representation.
%
Since first passage times are encountered everywhere from queuing theory to probability theory, Theorem~\ref{thm:iid_increment_liquidation_probability} can be useful outside of the finance domain.
We situate our work within the literature in Section~\ref{sec:related} and conclude by listing open problems for future work.



\section{Proposed Lending Protocol}\label{sec:marketplace}

In this section, we formally present a lending protocol inspired from existing protocols, but with enough differences that allow us to give performance guarantees on individual loans. 
First, we define a pair of tokens: a liquid token (e.g., a stablecoin such as USDC) and a less liquid token (e.g., BTC, tokenized assets). Next, we each participant has a type, as in Bayesian games: the type determines whether a participant is a lender holding the liquid token or a borrower\footnote{The borrower is a ``leverager'' who uses a collateral as leverage to secure a loan of the liquid asset.} holding the less liquid token, and determines the private information (or belief) held by each participant. The participants interact through sequential actions, and these actions are observed by other participants (e.g., committed to the blockchain).

Just as stablecoins are characterized by their predictable exchange rates, fixed-rate loans are characterized by predictable interest rates.
Our lending protocol is a peer-to-peer marketplace for loans, like Facebook Marketplace and Prosper, where anyone can offer to sell items with arbitrary sale parameters (e.g., cash payment, pick-up in person, etc.). 
Moreover, buyer can also offer to purchase items that meet a set of criteria.
Sellers and buyers compete by looking at each others’ offers and updating their own offer parameters.
A sale is concluded instantaneously on a first-come-first-served basis.
The main difference with Facebook marketplace and traditional lending through banks is the need for over-collateralization in the trustless world of DeFi. Every borrower must put down a collateral that exceeds in value what is borrowed. The concept of value is tied to a pricing function, and it is hard to get right, especially for tokens with lower circulation and liquidity.

Despite the large number of participants, we will use notation borrowed from Bayesian games and stochastic games \cite{shapley} to facilitate the description of the interaction between a number of participants (wallets) subject to a set of rules and state transitions. The state tracks who owns what, the rules determine the actions allowed in each state and the resulting next state.

\subsection{Time, Assets, Wallets, Valuation}

Throughout the paper, we consider two assets: a liquid asset \textsc{liq} (e.g., USDC) and a collateral asset \textsc{col} (e.g., BTC, tokenized assets) that is illiquid or less liquid.
The liquid asset also serves as the num\'eraire. 
Although the notion of liquidity of an asset does not directly affect the guarantees that we present, it is reflected in the price of the asset \cite{drechsler2020liquidity,kiyotaki2005liquidity,farhi2008theory}: an asset is less liquid when its price is more volatile (e.g., due to private information).
Time is discrete and indexed by $t=0,1,2,\ldots$, corresponding to successive checkpoints at which the lending protocol evaluates the states of ongoing loans. Each participant is a cryptographic wallet characterized by a type that determines the assets held, whether it is a lender or borrower, and its preference over the space of outcomes (e.g., long-term discounted payoff). In particular, the type determines whether the participant prioritizes one loan parameter value (e.g., interest rate $r$) over another (e.g., liquidation threshold $c$, etc.). 
Since decentralized lending protocol are permissionless, a single decision-maker can easily control multiple wallets with the same or different types---e.g., both lender and borrower wallets.

For notation sake, we distinguish three wallets:
lender wallet $w_L$, borrower wallet $w_B$, and vault wallet $w_V$.
The lender wallet $w_L$ and borrower wallet $w_B$ are participant-controlled wallets. 
The vault wallet $w_V$ denotes a wallet created by the protocol for the sole purpose of storing the liquid and collateral assets committed to an offer or loan. It is not a participant in the game-theoretic sense: it has no type and no preference relation. Its balance of assets changes only through the state-transition rules of the protocol.
The creation of the wallet $w_V$ is triggered by certain transactions of the protocol, i.e., offer-to-lend and offer-to-borrow of Section~\ref{sec:act}. Each vault wallet $w_V \triangleq w_V(\gamma)$ is uniquely associated with a transaction hash $\gamma$. We drop the $\gamma$ notation when the triggering transaction is clear from context.

The balance of each wallet at time $t$ is represented by a pair of nonnegative numbers. For example, we denote the lender wallet $w_L$'s balance at time $t$ by
    $W_{L,t} = (W_{L,t}[0], W_{L,t}[1]) \in \mathbb R_{\ge 0}^2$.
The coordinate $[0]$ denotes the collateral balance, and the coordinate $[1]$ denotes the liquid asset balance. The same convention applies to $w_B$ and $w_V$. Thus, $W_{B,t}[0]$ is the collateral balance of the borrower wallet at time $t$, and $W_{V,t}[1]$ is the liquid balance of the vault wallet at time $t$. Wallet balances are generally random variables, since they depend on participant actions and on the future valuation of collateral.

This wallet-based representation is a major departure from lending pools (cf. 
Appendix~\ref{sec:pool}). In the lending-pool model, credit-tokens and debit-tokens are introduced to track deposits and outstanding debt. In our protocol, there are no credit-tokens or debit-tokens. Each offer or loan is instead represented directly by its record and by the balances of its associated vault wallet.

We now define the valuation process. Since liquid asset is the num\'eraire, the relevant valuation is the value of the collateral in the liquid asset. At time $t$, let $\Xi_t \in \mathcal X$ denote a snapshot of the external environment, such as information from centralized exchanges and decentralized exchanges. The valuation process takes $\Xi_t$ and an amount $z$ of collateral as inputs, and returns the corresponding value in liquid asset: $\phi_t(\Xi_t,z) \in \mathbb R_{\geq 0}$.
When the environment $\Xi_t$ is clear from context, we write simply $\phi_t(z)$. Unlike a unit-price oracle, this valuation process can depend on the quantity $z$ being valued. This allows the model to capture liquidity and market-depth effects when the collateral amount is large. We assume that, for each $t$ and $\Xi_t$, the map $z \mapsto \phi_t(\Xi_t,z)$ is monotone increasing. Hence, when the inverse is well-defined, $\phi_t^{-1}(x)$ denotes the amount of collateral whose value at time $t$ is $x$ liquid asset.

\begin{remark}[Valuation convention for the liquid asset]
\label{rem:liq_valuation_convention}
Throughout this paper, we take \textsc{liq} to be USD and express all
asset values in USD. We assume that one USD is worth one USD.
Consequently, the USD value of an amount $v$ of \textsc{liq} is also
$v$. We therefore use $v$ to denote both the amount of \textsc{liq} and
its USD value, without introducing a separate valuation mapping for
\textsc{liq}. By contrast, $\phi_t(z)$ denotes the USD value at time $t$
of an amount $z$ of \textsc{col}.
\end{remark}

\subsection{Design Principles}

We design the set of actions to meet the following six fundamental principles.
\begin{enumerate}
\item[\textsc{pcp} 1] During an interaction where the lender acts first, and the borrower acts second:
\begin{enumerate}
    \item The lender controls the loan amount $v$, the loan duration $d$, the interest rate $r$, and the probability of repayment with interest.
    \item The borrower controls the start time $t_0$ and the probability of liquidation (through the collateral amount $z$).
\end{enumerate}

\item[\textsc{pcp} 2] During an interaction where the borrower acts first, and the lender acts second: 
\begin{enumerate}
\item The borrower controls the loan amount $v$, the collateral amount $z$, the loan duration $d$, the interest rate $r$, and influences the probability of liquidation.
    \item The lender controls the start time $t_0$.
\end{enumerate}

    \item[\textsc{pcp} 3] (Ideal ending) The borrower can end the loan at any time by repaying in full with interest.
    \item[\textsc{pcp} 4] (Painless ending) The lender or the borrower can split the collateral at any time after maturity, with a share equal in value to the repayment amount going to the lender.

    \item[\textsc{pcp} 5] During the course of the loan, the borrower controls the probability of liquidation by topping-up the collateral.
    \item[\textsc{pcp} 6] (Painful ending) During the course of the loan, the lender controls the probability of repayment by liquidating the collateral when some conditions are met.
\end{enumerate}

\begin{remark}[Similarity to US Treasuries]
    Principle 2 is exactly how US treasuries are sold to retail investors at non-competitive interest rates.
    Principle 1 is similar to Principle 2, but giving more control to the borrower.
\end{remark}

\subsection{Actions, Offers, Loans}\label{sec:act}

In our protocol, it is more natural to introduce the action space before the state space.
The actions available are reminiscent of those in a limit order book \cite{cont2023mathematicalframeworkmodellingorder}, with a few additional actions, and more parameters for each action because each loan is not only described by a price.
In this section, we give only an overview of actions, and defer to Section~\ref{sec:transition} the description of conditions under which each action is allowed.
Each participant has a private type and observes the set of available offers and wallet balances. Next, each participant decides to take-up an existing offer, wait for a better offer, or create its own offer. A participant who creates an offer is a first-mover who must be careful to include a risk premium in their offer that accounts for the risk that other participants possess private information, e.g., about an impending crash in value of a collateral \cite{rseth}. This risk premium can be in the form of a higher interest rate, a lower liquidation threshold higher, etc.
A participant who is already a party in an ongoing loan must also decide whether to top-up loans where it is a borrower, or to enter into a new loan with better terms in order to repay a previous loan, and whether to liquidate loans where it is lender.

Beside waiting and taking no action, at every time step, a participant can take an arbitrary number of actions selected from the following set, which fulfill the fundamental requirements\footnote{If we ignore the transaction limits in one block of a blockchain, the action space is actually countably infinite.}. The parenthetical \textsc{pcp} references indicate the design principles associated with each action.
\begin{enumerate}
    \item\label{act:offer-to-lend} (\textsc{pcp} 1) Offer to lend: a lender wallet \(w_L\) makes public an offer to lend out \(v\) units of \textsc{liq}, with a fixed duration \(d\), a fixed interest rate \(r\) for each time step, an initial \textsc{col} minimum required collateral ratio \(\rho\), and liquidation condition \(c\). The protocol creates a vault wallet \(w_V\) for this offer, and the vault wallet holds the \(v\) units of \textsc{liq} until the offer is taken up or canceled.
    The liquidation condition \(c\) specifies that whenever the ratio of values between collateral \textsc{col} and loaned \textsc{liq} drops below the liquidation threshold \(c\), liquidation may be triggered.
    For simplicity, we omit for now additional contract conditions imposed by the lender, such as limiting the set of further actions that can be taken on the borrowed tokens.
    \item\label{act:offer-to-borrow} (\textsc{pcp} 2) Offer to borrow: a borrower wallet $w_B$ makes public an offer to borrow $v$ units of \textsc{liq}, backed by $z$ units of \textsc{col},
with a fixed duration $d$, a fixed interest rate $r$ per time step,
a minimum required initial collateral ratio $\rho$, and a liquidation
threshold $c$. The protocol creates a vault wallet $w_V$ for this offer,
and the vault wallet holds the $z$ units of \textsc{col} until the offer
is taken up or canceled.
    \item\label{act:take-up-offer} (\textsc{pcp} 1 and 2) Take-up an offer: a wallet takes up an offer for lending or an offer for borrowing, with collisions resolved according to first-come-first-served or transaction precedence as recorded on the blockchain.
    \item\label{act:cancel-offer} (\textsc{pcp} 1 and 2) Cancel previous offer: a wallet cancels a previously announced offer that has not yet found a taker.
    \item\label{act:top-up} (\textsc{pcp} 5) The borrower can top-up the amount of collateral in an ongoing loan to reduce the perceived probability of defaulting.
    \item\label{act:interrupt-loan} (\textsc{pcp} 3) Interrupt loan: the borrower can end the loan early by repaying the lender the principal plus interest over the full duration \(d\).
    \item\label{act:natural-end} (\textsc{pcp} 4) If the loan is still ongoing at the maturity time, the liquidation condition is first evaluated. If the
    liquidation condition is not satisfied, the lender or, with the lender's
    authorization, the protocol service provider immediately executes the
    natural-ending settlement using the collateral valuation at the maturity
    boundary. The lender is paid with a portion of the collateral, and the
    remaining collateral is returned to the borrower.
   \item\label{act:liquidate-loan} (\textsc{pcp} 6) Liquidate the collateral in an ongoing loan if the ratio of collateral value to accrued debt falls below $c$. Liquidation is only available up to maturity.
   The lender receives only the amount of collateral whose value equals the accrued debt at the liquidation time, capped by the collateral available in the vault. Any remaining collateral is returned to the borrower.
\end{enumerate}

\begin{remark}[Additional actions]
    Although the set of actions that we introduce can be qualified as bare-bones, it is sufficient to derive strong performance guarantees in the following sections. In practice, participants would require additional actions, such as
allowing the borrower to draw down from the collateral when its value grows quickly; allow the lender to use a loan on secondary markets like US Treasury bills; and allow participants to modify an offer instead of canceling it.
\end{remark}

The fundamental object of our lending protocol is an offer or loan record
involving a lender wallet, a borrower wallet, and a protocol-created
vault wallet. For ease of reference, the notation used in the offer and loan below is summarized in Table~\ref{tab:TableOfNotation}.
An offer to lend created by lender wallet $w_L$ is represented by
\begin{align*}
    \ell
    \triangleq
    (w_L,\emptyset,w_V,\infty,v,0,d,r,\rho,c).
\end{align*}
Here, $w_L$ is the lender wallet, $\emptyset$ is a placeholder for the
borrower wallet, $w_V$ is the vault wallet created for the offer,
$\infty$ is a placeholder for the loan start time, $v$ is the amount of
\textsc{liq} offered for lending, and $0$ is a placeholder for the
amount of \textsc{col} collateral. The parameters $d$, $r$, $\rho$, and
$c$ denote the loan duration, the interest rate per time step, the
required initial collateral ratio, and the liquidation threshold,
respectively. The vault wallet $w_V$ holds $v$ units of \textsc{liq}.

\begin{remark}[Initial collateral ratio $\rho$ and liquidation threshold $c$]
The parameter $\rho$ is the minimum required initial
collateral-to-principal ratio $\phi_{t_0}(z)/v$, whereas $c$ is
the liquidation threshold for the collateral-value-to-accrued-debt
ratio during the loan. For an admissible loan, $\rho>c$ ensures
that the loan starts strictly above the liquidation threshold. Throughout this paper, we focus on permissionless lending and
restrict attention to $c>1$. This provides an overcollateralization
buffer: liquidation can become available while the collateral
value still exceeds the accrued debt. By contrast, thresholds $c\leq1$
provide no such buffer.
\end{remark}

Similarly, an offer to borrow created by borrower wallet $w_B$ is
represented by

\begin{align*}
    b
    \triangleq
    (\emptyset,w_B,w_V,\infty,v,z,d,r,\rho,c).
\end{align*}
Here, $\emptyset$ is a placeholder for the lender wallet, $w_B$ is the borrower wallet, $v$ is the exact amount of \textsc{liq} requested by the
borrower and $z$ is the amount of \textsc{col} collateral. All other entries have
the same meanings as in the offer-to-lend record. The vault wallet $w_V$
holds $z$ units of \textsc{col}.
Finally, an ongoing loan contract between lender wallet $w_L$ and
borrower wallet $w_B$ is represented by
\begin{align*}
    o
    \triangleq
    (w_L,w_B,w_V,t_0,v,z,d,r,\rho,c).
\end{align*}
Here, $t_0$ is the loan start time, and all other entries have the same
meanings as in the offer records. The vault wallet $w_V$ holds $z$ units
of \textsc{col}. The maturity boundary of the ongoing loan is
    $t_0+d$.

\begin{definition}[Accrued debt and health factor]
\label{def:health_factor}
For an ongoing loan $o=(w_L,w_B,w_V,t_0,v,z,d,r,\rho,c)$
with $v>0$, $c>0$, and $1+r>0$, its accrued debt and health factor
at time $t$ are respectively:
\begin{align}
    D_t
    &\triangleq (1+r)^{t-t_0}v,
    \label{eq:accrued_debt_definition}\\
    h_t
    &\triangleq \frac{\phi_t(z)}{cD_t}
    = \frac{\phi_t(z)}{c(1+r)^{t-t_0}v}.
    \label{eq:health_factor_definition}
\end{align}
Here, $t_0$ is the loan start time, $t$ is the current time,
$z$ is the amount of \textsc{col} collateral held in the vault,
$\phi_t(z)$ is the USD value of this collateral at time $t$,
$v$ is the principal value in USD, $r$ is the interest rate
per time step, and $c$ is the liquidation threshold.
\end{definition}

\begin{remark}[Adaptive health factor]
The health factor in \eqref{eq:health_factor_definition} follows
the convention used in Aave \cite{aave} and Morpho \cite{morpho},
incorporating the liquidation threshold directly into the ratio.
Our parameter $c>1$ corresponds to a conventional liquidation
threshold $1/c<1$. Liquidation becomes available when $h_t<1$,
or equivalently when $\phi_t(z)/D_t<c$.
Unlike overly conservative formulations that compare the collateral value with the
loan's final maturity value $(1+r)^dv$ \cite{hps}, our definition is adaptive and 
compares against the accrued debt $D_t=(1+r)^{t-t_0}v$ up to time $t$.
\end{remark}

\begin{definition}[Admissible loan at origination]
\label{def:admissible_loan}
A loan that starts at time $t_0$, with principal $v$, collateral $z$,
initial collateral ratio $\rho$, and liquidation threshold $c$, is admissible at origination if
\begin{align*}
    \rho>c>0
    \qquad\mbox{and}\qquad
    \phi_{t_0}(z)\geq\rho v.
\end{align*}
\end{definition}
At origination time, admissibility ensures that
\begin{align*}
    h_{t_0}
    =
    \frac{\phi_{t_0}(z)}{cv}
    \geq
    \frac{\rho}{c}
    >
    1,
\end{align*}
so that the loan is not immediately eligible for
liquidation.

\begin{remark}[Offer creation and take-up]
An offer may be created under Action~\ref{act:offer-to-lend} or~\ref{act:offer-to-borrow} only if its specified parameters satisfy the applicable conditions in Definition~\ref{def:admissible_loan}. At take-up under Action~\ref{act:take-up-offer}, the resulting loan must satisfy that definition under the current valuation $\phi_t$.
For an offer to borrow, collateral $z$ is committed at creation. A fall in its value may prevent take-up for a proposed loan amount $v$, while the offer remains recorded. The borrower may cancel it through Action~\ref{act:cancel-offer} and submit a new offer with sufficient collateral through Action~\ref{act:offer-to-borrow}. If left open, the offer may become executable again for that amount if its collateral value recovers. For an offer to lend, no collateral amount is fixed at creation; the borrower chooses and supplies sufficient collateral at take-up.
\end{remark}

\subsection{State Transitions}\label{sec:transition}

In contrast to lending pools, our state space is much more complicated. It is not even a vector space or finite-dimensional metric space. It is composed of all possible sets of offers, where each offer can be either an offer for borrowing or an offer for lending.
A state in our protocol is therefore an element from a countably infinite-dimensional state space, with the form
\begin{align*}
    S_t \triangleq 
    (
    \underbrace{\ell^{(1)}, \ldots, \ell^{(m_t)}}_{\textrm{offers to lend}},
    \underbrace{b^{(1)}, \ldots, b^{(n_t)}}_{\textrm{offers to borrow}},
    \underbrace{o^{(1)}, \ldots, o^{(k_t)}}_{\textrm{ongoing loans}},
    W_t^{(1)}, \ldots, W_t^{(N_t)},
    \phi_t
    ),
\end{align*}
where \(m_t\) and \(n_t\) denote the numbers of outstanding offers available on the market, \(k_t\) is the number of ongoing loan contracts, and \(W_t^{(1)},\ldots,W_t^{(N_t)}\) denote the balances of all wallets in the protocol, including lender wallets, borrower wallets, and vault wallets. Here \(W_t^{(1)},\ldots,W_t^{(N_t)}\) include ordinary participant wallets as well as vault wallets. The role of a wallet in a particular offer or loan is specified by the position in which it appears in $\ell,b,o$.

The three collections of records in the state are disjoint at each time \(t\). A record in the first two collections is an open offer, whereas a record in the third collection is an ongoing loan. When an open offer is taken up, it is removed from the corresponding offer collection and inserted into the ongoing-loan collection.
The initial state is an empty market with no offers, and all wallet balances are at their initial values.

The state transitions keep track of the available offers on the market, the ongoing contracts, wallet balances, vault balances, and the valuation of collateral.

Each action taken by a participant (wallet) at time \(t\) triggers a corresponding state transition as follows.
\begin{enumerate}
\item Wallet \(w_L\) initiates the offer to loan \(\ell = (w_L, \emptyset, w_V, \infty, v, 0, d, r, \rho, c)\):
\begin{align*}
    S_{t+1} &= S_t \cup \ell,\\
    W_{L,t+1}[1] &= W_{L,t}[1] - v \quad (\mbox{commit \textsc{liq}}),\\
    W_{V,t+1}[1] &= W_{V,t}[1] + v \quad (\mbox{vault receives \textsc{liq}}).
\end{align*}
The vault wallet \(w_V\) receives \(v\) in \textsc{liq}.

\item Borrower wallet \(w_B\) takes up an offer to lend \(\ell\) from lender wallet \(w_L\) by committing \(z\) units of \textsc{col} at time \(t_0\):
\begin{align*}
    S_t = (\ldots, \ell, \ldots) &\mapsto S_{t+1} = (\ldots, o, \ldots),\\
    W_{B,t+1}[1] &= W_{B,t}[1] + v, \quad (\mbox{receive \textsc{liq}})\\
    W_{B,t+1}[0] &= W_{B,t}[0] - z, \quad (\mbox{collateral})\\
    W_{V,t+1}[1] &= W_{V,t}[1] - v, \quad (\mbox{vault sends \textsc{liq}})\\
    W_{V,t+1}[0] &= W_{V,t}[0] + z \quad (\mbox{vault receives \textsc{col}}).
\end{align*}
The vault wallet \(w_V\) sends \(v\) in \textsc{liq} to the borrower wallet and receives \(z\) in \textsc{col}.
The borrower wallet \(w_B\) is added to the smart contract, along with the current time \(t\), and the collateral \(z\):
\begin{align*}
    \ell = (w_L,\emptyset,w_V,\infty,v,0,d,r,\rho,c)
    \mapsto
    (w_L,w_B,w_V,t,v,z,d,r,\rho,c)=o.
\end{align*}
This transaction is only allowed if the collateral \(z\) amount results in an admissible loan (Definition~\ref{def:admissible_loan}). In practice, the collateral amount is typically chosen so that its initial value is determined by the collateral ratio $\rho$, with $\phi_{t_0}(z)=\rho v$ up to implementation-level rounding and oracle-update effects.

It is extremely important that the borrower calculates the value \(z\) carefully to avoid falling into a liquidation condition once the loan starts, which is subject to the valuation process and the volatility in the value of \textsc{col} relative to \textsc{liq}.

\item Borrower $w_B$ initiates an offer to borrow $b$, requesting $v$ units of \textsc{liq} and committing $z$ units of \textsc{col} as collateral:
\begin{align*}
    S_{t+1} &= S_t \cup b,\\
    W_{B,t+1}[0] &= W_{B,t}[0] - z \quad (\mbox{commit \textsc{col} as collateral}),\\
    W_{V,t+1}[0] &= W_{V,t}[0] + z \quad (\mbox{vault receives \textsc{col}}).
\end{align*}

\item Lender wallet $w_L$ takes up an offer to borrow $b$ at time $t$ by providing the principal $v$ specified in the offer:
\begin{align*}
    S_t = (\ldots, b, \ldots) &\mapsto S_{t+1} = (\ldots, o, \ldots),\\
    W_{L,t+1}[1] &= W_{L,t}[1] - v, \quad (\mbox{lender commits \textsc{liq}})\\
    W_{B,t+1}[1] &= W_{B,t}[1] + v, \quad (\mbox{borrower receives \textsc{liq}}),\\
    W_{V,t+1}[0] &= W_{V,t}[0] \quad (\mbox{vault keeps \textsc{col} collateral}).
\end{align*}
The lender wallet $w_L$ is added to the smart contract, along with the current time $t$:
\begin{align*}
    b = (\emptyset,w_B,w_V,\infty,v,z,d,r,\rho,c)
    \mapsto
    (w_L,w_B,w_V,t,v,z,d,r,\rho,c)=o.
\end{align*}

This transaction is only allowed if the collateral amount $z$ and the loan value $v$ result in an admissible loan 
(Definition~\ref{def:admissible_loan}), thereby satisfying the minimum required initial collateral ratio $\rho$.

\item The offer creator cancels that offer to lend \(\ell\):
\begin{align*}
    S_{t+1} &= S_t \setminus \ell,\\
    W_{V,t+1}[1] &= W_{V,t}[1] - v \quad (\mbox{vault sends \textsc{liq}}),\\
    W_{L,t+1}[1] &= W_{L,t}[1] + v \quad (\mbox{redeem \textsc{liq}}).
\end{align*}

\item The offer creator cancels an offer to borrow \(b\):
\begin{align*}
    S_{t+1} &= S_t \setminus b,\\
    W_{V,t+1}[0] &= W_{V,t}[0] - z \quad (\mbox{vault sends \textsc{col}}),\\
    W_{B,t+1}[0] &= W_{B,t}[0] + z \quad (\mbox{redeem \textsc{col}}).
\end{align*}

\item Borrower \(w_B\) tops up collateral by \(\Delta z\) units of \textsc{col}:
\begin{align*}
    o = (w_L,w_B,w_V,t_0,v,z,d,r,\rho,c)
    &\mapsto
    o' = (w_L,w_B,w_V,t_0,v,z+\Delta z,d,r,\rho,c),\\
    W_{B,t+1}[0] &= W_{B,t}[0] - \Delta z \quad (\mbox{commit \textsc{col} as collateral}),\\
    W_{V,t+1}[0] &= W_{V,t}[0] + \Delta z \quad (\mbox{vault receives \textsc{col}}).
\end{align*}

\item Interrupt loan: the borrower $w_B$ ends the loan early at time $t$ such that $t-t_0 \le d$:
\begin{align*}
    S_{t+1} &= S_t \setminus o,\\
    W_{B,t+1}[1] &= W_{B,t}[1] - (1+r)^d \cdot v, \quad (\mbox{pay back loan with interest in \textsc{liq}}),\\
    W_{L,t+1}[1] &= W_{L,t}[1] + (1+r)^d \cdot v, \quad (\mbox{lender receives \textsc{liq}}),\\
    W_{V,t+1}[0] &= W_{V,t}[0] - z \quad (\mbox{vault returns \textsc{col} collateral}),\\
    W_{B,t+1}[0] &= W_{B,t}[0] + z \quad (\mbox{return full \textsc{col} collateral}).
\end{align*}

\item The lender liquidates loan $o$ at checkpoint $t\in\{t_0+1,\ldots,t_0+d\}$ when the collateral value falls below the liquidation threshold, or equivalently when $h_t<1$.
The liquidation action terminates the loan and distributes the collateral between the lender and borrower respectively as follows:
    \begin{align}
        a_t \triangleq \min\{\phi_t^{-1}(D_t),z\},\quad        e_t \triangleq z-a_t. \label{eq:settlement_lender_collateral}
    \end{align}
    The state transition is
    \begin{align}
        S_{t+1} &= S_t\setminus o, \label{eq:settlement_first_state_transition}\\
        W_{V,t+1}[0] &= W_{V,t}[0]-z
        \quad (\mbox{vault sends \textsc{col} collateral}),\nonumber\\
        W_{L,t+1}[0] &= W_{L,t}[0]+a_t
        \quad (\mbox{lender receives }a_t\mbox{ units of \textsc{col}}),\nonumber\\
        W_{B,t+1}[0] &= W_{B,t}[0]+e_t
        \quad (\mbox{return remaining \textsc{col} collateral}). \label{eq:settlement_last_state_transition}
    \end{align}

\item Natural end of loan $o$: at the maturity time $t=t_0+d$, if the loan is still ongoing, the protocol first evaluates the liquidation condition. If $h_t<1$, the liquidation action described above applies. Otherwise, the protocol service provider immediately executes the natural-ending settlement. The settlement follows the same collateral-distribution rules as benevolent liquidation. Specifically, the collateral amounts are given by \eqref{eq:settlement_lender_collateral}, and the state transition is given by \eqref{eq:settlement_first_state_transition}--\eqref{eq:settlement_last_state_transition}, all evaluated at $t=t_0+d$.
To avoid the complexity of performing an exchange between \textsc{col} and \textsc{liq}, the repayment occurs in units of \textsc{col} as opposed to \textsc{liq}, and lender has the burden of converting it back to $(1+r)^d v$ units \textsc{liq}.
\end{enumerate}

\begin{remark}[Asset-agnostic lender]
    Observe that the interrupt loan ending repays the lender in \textsc{liq}, whereas the other endings repay in \textsc{col}. We assume throughout the paper that the lender is agnostic to receiving the repayment in \textsc{col} or \textsc{liq}. In other words, we ignore costs for converting between \textsc{col} and \textsc{liq}.
\end{remark}

\begin{remark}[Alternative liquidation rules]\label{rem:full}
    The above protocol presents a rule for distributing the collateral that prioritizes repayment of the lender. A more general protocol could conceivably allow a distinct distribution rule for each loan. For instance, a distribution rule that transfers the full collateral to the lender in the event of a liquidation would encourage lenders to participate and borrowers to keep the health factor at more conservative levels.
\end{remark}

\begin{figure}[htbp]
\centering
\begin{tikzpicture}[
    >=stealth',
    font=\small,
    loanstate/.style={
        circle,
        draw=black,
        thick,
        fill=gray!15,
        minimum size=11mm,
        inner sep=0pt
    }
]
    \node[loanstate,fill=blue!20]
        (start) at (0,2.8) {Start};

    \node[loanstate] (S1) at (-4.5,0) {S1};
    \node[loanstate] (S2) at (4.5,0) {S2};
    \node[loanstate] (S3) at (0,0) {S3};
    \node[loanstate,double,double distance=1pt]
        (S4) at (0,-3.5) {S4};

    \draw[->]
        (start) -- node[above,sloped]
        {create offer to borrow} (S1);

    \draw[->]
        (start) -- node[above,sloped]
        {create offer to lend} (S2);

    \draw[->]
        (S1) -- node[above] {take up} (S3);

    \draw[->]
        (S2) -- node[above] {take up} (S3);

    \draw[->]
        (S1) -- node[left] {cancel} (S4);

    \draw[->]
        (S2) -- node[right] {cancel} (S4);

    \draw[->]
        (S3) -- node[
            right,
            pos=0.4,
            align=left
        ] {interrupt,\\liquidate,\\natural end} (S4);

    \path[->]
        (S3) edge[loop above]
        node[above] {top up} (S3);
\end{tikzpicture}
\caption{Relation between offers and loans.}
\label{fig:placeholder}
\end{figure}

Figure~\ref{fig:placeholder} illustrates the transition from an offer to an ongoing loan and to a terminated loan.
The state $S1$ denotes an available offer to borrow, $S2$ denotes an available offer to lend, $S3$ denotes an ongoing loan, and $S4$ denotes a terminated loan or canceled offer.
The arrows show the transitions following the different actions of the protocol.

\begin{figure}[htbp]
    \centering
    \captionsetup{margin=5mm,font=small}
    \includegraphics[width=\textwidth]{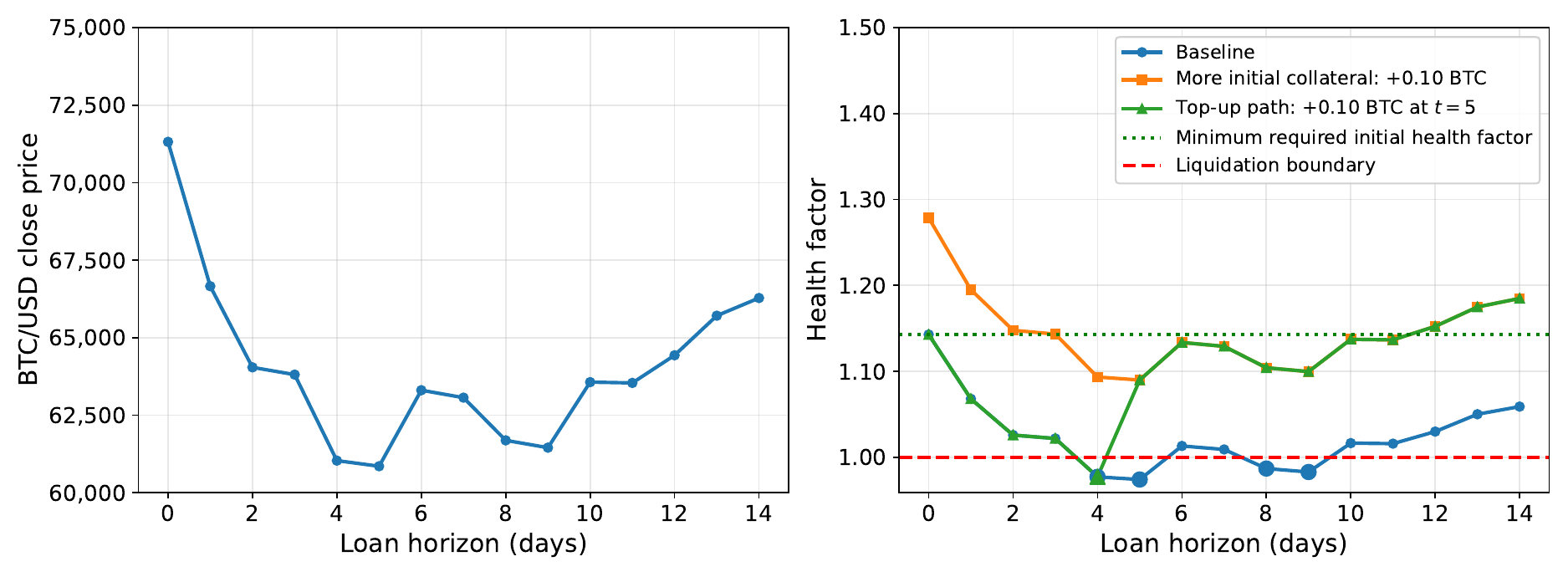}
    \caption{BTC/USD daily close prices and health factor paths under three collateral strategies.}
    \label{fig:btc_price_health_factor}
\end{figure}

Figure~\ref{fig:btc_price_health_factor} illustrates the evolution of the health factor over the course of a loan that starts at time $0$.
The left panel shows BTC-USD daily close prices $P_t$ from 2026-06-01 to 2026-06-15. The right panel compares three collateral strategies for a loan with $v=50000$, $c=1.05$, $\rho=1.2$, and an APR of $8.00\%$. Health factors are evaluated using \eqref{eq:health_factor_definition}, with $\phi_t(z)=P_tz$: higher BTC prices or additional collateral increase the health factor, while accrued debt growth reduces it. The dotted and dashed horizontal lines mark the minimum required initial health factor and the liquidation boundary, respectively.

The baseline path (circle markers) starts with exactly the required collateral, giving an initial health factor of approximately $1.14$. The more-initial-collateral path (square markers) adds $0.1$ BTC at origination, increasing the initial health factor to approximately $1.28$, whereas the top-up path (triangle markers) adds the same amount of BTC on day $5$. Before day $5$, the top-up path follows the baseline below the more-initial-collateral path; from day $5$ onward, it coincides with the latter because both hold the same amount of BTC.
Both the baseline and top-up paths first fall below the liquidation boundary on day $4$ (2026-06-05). The baseline is below the boundary on days $4$, $5$, $8$, and $9$, whereas the top-up path is below it only on day $4$. Note that the continuations of the baseline and top-up paths beyond day $4$ are shown for illustration only. If liquidation is executed at the first eligible checkpoint, both loans terminate on day $4$, before the scheduled top-up. By contrast, the more-initial-collateral path remains above the boundary throughout this historical window, with a minimum health factor of approximately $1.09$, avoiding liquidation along this price path.

\begin{remark}[Regime of few participants]
    When there are few participants, we can restrict the values that the parameters can take in order to increase the likelihood that offers are taken up. For example, \(d,\rho,c\) take finite values, and \(v\) is in multiples of \(100\).
\end{remark}

\begin{remark}[Frontrunning]
    Frontrunning can occur in various ways. For example, a lender cancels an existing offer to lend and then is about to take up an offer to borrow, but another lender swoops in and preempts this take-up action by paying more gas fee.
\end{remark}

\section{Performance Guarantees}
\label{sec:guarantee}

In this section, we derive two guarantees for participants in our protocol.
First, we quantify the probability of liquidation of the collateral during the contractual loan period. Second, we quantify the probability that the lender  receives less than the
promised repayment when a loan terminates.
These are arguably the most important quantities for borrowers and lenders, and can serve as stepping stones to derive other guarantees.
Both guarantees are derived under a pessimistic scenario where the borrower never tops up the collateral, and where the lender always triggers the liquidation action as soon as it is allowed.

Consider an admissible loan and measure time relative to loan
origination. Throughout this section, loan states are indexed by
$t\in\{0,\ldots,d\}$, where $t=0$ is the origination time and $t=d$ is
the maturity time. We assume that neither a
collateral top-up nor an interrupt action occurs after origination; hence, the collateral amount $z$ remains fixed. 
As specified in Section~\ref{sec:transition}, the loan is liquidated at the first checkpoint $t=1,\ldots,d$ where $h_t<1$. At maturity time $d$,
liquidation check is performed one last time before the natural-ending settlement occurs.

We model the collateral valuation as a geometric random walk. This is the
discrete-time analogue of the geometric Brownian-motion models commonly
used in finance \cite{hlz,merton}. Because the collateral amount $z$
remains fixed throughout the analysis, it is sufficient to specify the
valuation process for the collateral held by the particular loan under
consideration.

\begin{assumption}[Gaussian log-valuation increments]
\label{as:1}
For the collateral amount $z$ held by the loan,
\begin{align*}
    \phi_t(z)
    =
    e^{\epsilon_t}\phi_{t-1}(z),
    \qquad
    t=1,\ldots,d,
\end{align*}
where, conditional on the information observed at origination,
$\epsilon_1,\ldots,\epsilon_d$ are independent and identically distributed
as $N(\mu,\sigma^2)$, with $\mu\in\mathbb R$ and $\sigma^2>0$.
\end{assumption}

With time measured from loan origination, so that $t_0=0$, define the
log inverse health factor by
\begin{align}
    Y_t
    \triangleq
    -\log h_t
    =
    \log\frac{cD_t}{\phi_t(z)},
    \qquad
    t=0,\ldots,d.
    \label{eq:log_inverse_health_factor_gaussian}
\end{align}
For the admissible loan under consideration, let $Y_0=y_0$ denote the observed value of the log inverse health factor at origination. For the remainder of the paper, all distributional statements,
probabilities, and expectations are conditional on the observed
initial value $Y_0=y_0$.
By Assumption~\ref{as:1} and 
\eqref{eq:log_inverse_health_factor_gaussian}, and defining the increments $\Delta Y_k \triangleq \log(1+r)-\epsilon_k$, we obtain
\begin{align}
    Y_t
    =
    y_0+\sum_{k=1}^t \underbrace{\left(\log(1+r)-\epsilon_k\right)}_{\Delta Y_k},
    \qquad
    t=1,\ldots,d.
    \label{eq:Y_gaussian_representation}
\end{align}
Starting from the
initial position $y_0$, each time increment $k$ brings an additional debt term $\log(1+r)$, which is countered by a collateral appreciation term $\epsilon_k$. 

We define the liquidation time in terms of the sequence $Y_t$ by observing that liquidation is triggered if $h_t < 1 \iff Y_t >0$.

\begin{definition}[Liquidation time]
    Let $T$ denote the time when liquidation is triggered, with the convention that it takes value $\infty$ if no liquidation is triggered. Formally,
    \begin{align}
    T
    \triangleq
    \begin{cases}
    \infty, &\text{if $Y_1\leq0,\ldots,Y_d\leq0$},\\
    \min\left\{
    t\in\{1,\ldots,d\}:Y_t>0
    \right\},
    &
    \text{otherwise}
    .
    \end{cases}
    \label{eq:decollateralization_time}
\end{align}
\end{definition}
In turn, we have
\begin{align}
\{T=1\} = \{Y_1>0\}, \quad 
    \{T=t\}
    =
    \left\{
    Y_1\leq0,\ldots,Y_{t-1}\leq0,\;
    Y_t>0
    \right\} \quad t=2,\ldots,d.
    \label{eq:first_liquidation_event}
\end{align}
The event $T=\infty$ denotes a natural-ending settlement at time $d$.

\subsection{Borrower Liquidation Probability}
\label{subsec:borrower_liquidation_probability}

Since the liquidation event is entirely determined by the
path $Y_1,\ldots,Y_d$, in this section, first establish the joint distribution of this sequence and then quantify the probability of liquidation.

\begin{lemma}[Multivariate Gaussian representation]
\label{prop:multivariate_gaussian_representation}
Under Assumption~\ref{as:1}, for every
$t\in\{1,\ldots,d\}$, the vector random variable $(Y_1,\ldots,Y_t)$
has a Gaussian distribution $N_t\left(\boldsymbol{\mu}_t^Y,\boldsymbol{\Sigma}_t^Y\right)$ with mean and covariance matrices:
\begin{align}
    (\boldsymbol{\mu}_t^Y)_i
    &\triangleq
    y_0+i\bigl(\log(1+r)-\mu\bigr),
    \quad
    (\boldsymbol{\Sigma}_t^Y)_{ij}
    \triangleq
    \sigma^2\min\{i,j\},
    \qquad \text{for } i,j \in 1,\ldots,t.\label{eq:816}
\end{align}
\end{lemma}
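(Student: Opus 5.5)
The plan is to realize $(Y_1,\ldots,Y_t)$ as a deterministic affine image of the i.i.d. Gaussian vector $(\epsilon_1,\ldots,\epsilon_t)$. Once that is done, the standard closure of multivariate normals under affine maps gives Gaussianity, and only the first two moments remain to be computed.

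First, under Assumption~\ref{as:1}, the increments $\Delta Y_k=\log(1+r)-\epsilon_k$, $k=1,\ldots,t$, are independent with distribution $N(\log(1+r)-\mu,\sigma^2)$. Hence the vector $\boldsymbol{\Delta}\triangleq(\Delta Y_1,\ldots,\Delta Y_t)^\top$ is $N_t\bigl((\log(1+r)-\mu)\mathbf 1,\sigma^2 I_t\bigr)$. This uses the fact that a vector of independent Gaussians is jointly Gaussian with diagonal covariance, since its characteristic function factorizes. Here all statements are conditional on $Y_0=y_0$, which is deterministic given the origination information.

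Next, by \eqref{eq:Y_gaussian_representation}, we can write $(Y_1,\ldots,Y_t)^\top=y_0\mathbf 1+L\boldsymbol{\Delta}$, where $L$ is the $t\times t$ lower-triangular matrix of ones, $L_{ik}=\mathbf 1\{k\le i\}$. An affine image $a+LX$ of $X\sim N_t(m,S)$ is $N_t(a+Lm,LSL^\top)$. This follows from the characteristic function $\mathbb E e^{iu^\top(a+LX)}=e^{iu^\top a}\varphi_X(L^\top u)$. The mean is therefore $(y_0\mathbf 1+L(\log(1+r)-\mu)\mathbf 1)_i=y_0+i(\log(1+r)-\mu)$, because row $i$ of $L$ contains $i$ ones. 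The covariance is $\sigma^2LL^\top$, and $(LL^\top)_{ij}=\sum_k \mathbf 1\{k\le i\}\mathbf 1\{k\le j\}=\min\{i,j\}$. This matches \eqref{eq:816}.

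There is no real obstacle; the only point needing care is to make the conditioning on $Y_0=y_0$ explicit. The $\epsilon_k$ are i.i.d. conditional on the origination information, and $y_0$ is measurable with respect to it, so the affine shift is a constant under the conditional law. Nondegeneracy is not required for the statement, but it holds: $L$ is invertible and $\sigma^2>0$, so $\boldsymbol{\Sigma}_t^Y$ is positive definite.
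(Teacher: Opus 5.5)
Your proposal is correct and follows essentially the same route as the paper: both write $(Y_1,\ldots,Y_t)$ as an affine image of the independent Gaussian vector $(\epsilon_1,\ldots,\epsilon_t)$ and then read off the mean $y_0+i(\log(1+r)-\mu)$ and the covariance $\sigma^2\min\{i,j\}$. The only difference is cosmetic: you write the covariance as $\sigma^2 LL^\top$ with the lower-triangular matrix of ones, whereas the paper expands the double sum of covariances directly.
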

We denote the multivariable Gaussian cumulative distribution function of $\mathbf G\sim N_t(\boldsymbol{\mu},\Sigma)$ by
    $\Phi_t(\mathbf x;\boldsymbol{\mu},\Sigma)
    \triangleq
    \mathbb P(\mathbf G\leq\mathbf x)$.
Let $\mathbf 0_t$ denote the $t$-dimensional zero vector. Our main guarantee for the borrower is the following probability of liquidation.

\begin{theorem}[Borrower liquidation probability]
\label{thm:borrower_liquidation_probability}
Under Assumption~\ref{as:1} and the loan setup in
Section~\ref{sec:guarantee}, the probability of liquidation at time $t$ is
\begin{align}
    \mathbb P(T=t) =
    \begin{cases}
    1-\Phi_1\left(0;\boldsymbol{\mu}_1^Y,\boldsymbol{\Sigma}_1^Y \right), & \text{for } t = 1 \\
        \Phi_{t-1}\left(\mathbf 0_{t-1};\boldsymbol{\mu}_{t-1}^Y,\boldsymbol{\Sigma}_{t-1}^Y\right)
    -
    \Phi_t\left(\mathbf 0_t;\boldsymbol{\mu}_t^Y,\boldsymbol{\Sigma}_t^Y\right),  & \text{for } t\in\{2,\ldots,d\},
\end{cases}
\end{align}
and the probability of liquidation by time $s$ is
\begin{align}
    \mathbb P(T\leq s)
    =
    1-
    \Phi_s\left(
    \mathbf 0_s;
    \boldsymbol{\mu}_s^Y,
    \boldsymbol{\Sigma}_s^Y
    \right), \quad \text{for }s\in\{1,\ldots,d\}.\label{eq:exact_borrower_liquidation_probability}
\end{align}
\end{theorem}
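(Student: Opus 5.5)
The plan is to express each liquidation event as a difference of orthant events for the Gaussian vectors in Lemma~\ref{prop:multivariate_gaussian_representation}, and then evaluate those orthant probabilities with the multivariate Gaussian cdf $\Phi_t$.

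For $s\in\{1,\ldots,d\}$, the definition of $T$ in \eqref{eq:decollateralization_time} gives
\begin{align*}
\{T>s\}=\{Y_1\le 0,\ldots,Y_s\le 0\}=\{(Y_1,\ldots,Y_s)\le \mathbf 0_s\}.
\end{align*}
This holds because $T>s$ (including $T=\infty$) means no index $t\le s$ has $Y_t>0$. By Lemma~\ref{prop:multivariate_gaussian_representation}, $(Y_1,\ldots,Y_s)\sim N_s(\boldsymbol{\mu}_s^Y,\boldsymbol{\Sigma}_s^Y)$. Hence
\begin{align*}
\mathbb P(T>s)=\Phi_s(\mathbf 0_s;\boldsymbol{\mu}_s^Y,\boldsymbol{\Sigma}_s^Y),
\end{align*}
and taking complements yields \eqref{eq:exact_borrower_liquidation_probability}.

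For the pointwise probabilities, take $t=1$ first. Then $\{T=1\}=\{Y_1>0\}$, so
\begin{align*}
\mathbb P(T=1)=1-\mathbb P(Y_1\le 0)=1-\Phi_1(0;\boldsymbol{\mu}_1^Y,\boldsymbol{\Sigma}_1^Y).
\end{align*}
For $t\in\{2,\ldots,d\}$, use \eqref{eq:first_liquidation_event} together with the set identity
\begin{align*}
\{T=t\}=\{T>t-1\}\setminus\{T>t\},
\end{align*}
where $\{T>t\}\subseteq\{T>t-1\}$. This gives
\begin{align*}
\mathbb P(T=t)=\mathbb P(T>t-1)-\mathbb P(T>t)=\Phi_{t-1}(\mathbf 0_{t-1};\cdot)-\Phi_t(\mathbf 0_t;\cdot).
\end{align*}
Consistency can be checked by telescoping: the pointwise formulas sum over $t=1,\ldots,s$ to $1-\Phi_s$.

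There is essentially no obstacle. The one point needing care is the legitimacy of the Gaussian cdf representation. Since $\sigma^2>0$, the matrix $\boldsymbol{\Sigma}_t^Y$ is nonsingular (it is $\sigma^2$ times the Brownian covariance $\min\{i,j\}$), so the joint law has a density. Consequently the boundary events $\{Y_i=0\}$ have probability zero, and it does not matter whether the inequalities are strict or weak.
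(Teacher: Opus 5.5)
Your proof is correct and follows the paper's argument: both identify $\{T>s\}$ with the orthant event $\{Y_1\le 0,\ldots,Y_s\le 0\}$, evaluate it with $\Phi_s$ via Lemma~\ref{prop:multivariate_gaussian_representation}, and obtain $\mathbb P(T=t)$ as the difference of consecutive survival probabilities. The only difference is the order: you get the cumulative formula directly by complementation and use telescoping only as a check, whereas the paper derives the pointwise probabilities first and telescopes them to reach $1-\Phi_s$.
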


\begin{remark}[Operating regime]
    This theorem shows that, in order to avoid frequent liquidation events or requiring frequent top ups, the lending protocol requires that the value of the collateral increases geometrically faster than the value of the loan with accumulated interest, i.e., that $\mu > \log(1+r)$ in \eqref{eq:816}.
\end{remark}

\begin{remark}[Borrower control]
The borrower can use
Theorem~\ref{thm:borrower_liquidation_probability} to select loan
terms that keep the liquidation probability below
a desired level. Subject to the admissibility
requirements, a smaller loan amount $v$, shorter duration $d$, lower
interest rate $r$, lower liquidation threshold $c$, or larger collateral
amount $z$ reduces liquidation probability. During the course of the loan, a re-evaluated version of the probability in \eqref{eq:exact_borrower_liquidation_probability} can also guide whether
and how much to top up the collateral.
\end{remark}

\subsection{Lender repayment shortfall Probability}
\label{subsec:lender_repayment_shortfall_probability}

In this section, we give a guarantee on the probability that the lender suffers a repayment shortfall when the loan terminates. Again, for simplicity, we assume that the borrower lets the loan terminate never topping it up nor interrupting it. We will see that the lender can control this probability through the parameters $\rho$ and $c$.

\begin{definition}[Repayment shortfall]\label{def:repayment_shortfall}
    We say that the lender suffers a repayment shortfall by time $s\in\{1,\ldots,d\}$ if the collateral cannot cover the accrued debt, i.e., $T\leq s$ and $\phi_T(z)<D_T$.
\end{definition}
Observe that no shortfall can occur if the borrower interrupts the loan or if $T = \infty$ (no liquidation).
In the event of a liquidation at time $t$, if the lender receives a portion of the collateral equivalent to the accrued debt $D_t$, it incurs a loss due to missing out on interest from time $t$ to the initial maturity $d$, but does not suffer a shortfall.
For simplicity, we ignore the slippage and other costs that the lender may incur to convert the received portion of \textsc{col}
to \textsc{liq}.

\begin{theorem}[Lender repayment shortfall probability]
\label{thm:lender_repayment_shortfall_probability}
Under Assumption~\ref{as:1}, with
the loan setup in Section~\ref{sec:guarantee}, the lender
repayment shortfall probability by time $s\in\{1,\ldots,d\}$ is
\begin{align}
&\mathbb P(T\leq s,\ \phi_T(z)<D_T)\nonumber\\
&=
1-\Phi_1\left(
    \log c;\boldsymbol{\mu}_1^Y,\boldsymbol{\Sigma}_1^Y
\right)
+\sum_{t=2}^{s}
\left[
    \Phi_{t-1}\left(
        \mathbf 0_{t-1};
        \boldsymbol{\mu}_{t-1}^Y,\boldsymbol{\Sigma}_{t-1}^Y
    \right)
    -
    \Phi_t\left(
        (\mathbf 0_{t-1}^{\top},\log c)^{\top};
        \boldsymbol{\mu}_t^Y,\boldsymbol{\Sigma}_t^Y
    \right)
\right],
\label{eq:exact_lender_repayment_shortfall_probability_revised}
\end{align}
with the sum interpreted as zero when $s=1$.
\end{theorem}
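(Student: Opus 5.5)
The plan is to translate the shortfall condition into a condition on the log inverse health factor, decompose by the value of the liquidation time, and evaluate each piece with Lemma~\ref{prop:multivariate_gaussian_representation}.

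\textbf{Step 1: rewrite the shortfall condition.} By \eqref{eq:log_inverse_health_factor_gaussian}, $\phi_t(z)<D_t$ holds if and only if $cD_t/\phi_t(z)>c$. Taking logarithms, this is equivalent to $Y_t>\log c$. Because $c>1$, we have $\log c>0$, so $\{Y_t>\log c\}\subseteq\{Y_t>0\}$. Hence on $\{T=t\}$ the shortfall event is exactly $\{Y_t>\log c\}$.

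\textbf{Step 2: decompose over the liquidation time.} The events $\{T=t\}$ for $t=1,\ldots,s$ are disjoint. Therefore
\begin{align*}
\mathbb P(T\le s,\ \phi_T(z)<D_T)=\sum_{t=1}^{s}\mathbb P(T=t,\ Y_t>\log c).
\end{align*}
Using \eqref{eq:first_liquidation_event} together with the inclusion from Step 1:
\begin{itemize}
\item for $t=1$, the event is $\{Y_1>\log c\}$, whose probability is $1-\Phi_1(\log c;\boldsymbol{\mu}_1^Y,\boldsymbol{\Sigma}_1^Y)$ by Lemma~\ref{prop:multivariate_gaussian_representation};
\item for $t\ge2$, the event is $\{Y_1\le0,\ldots,Y_{t-1}\le0,\ Y_t>\log c\}$, since the requirement $Y_t>0$ is absorbed into $Y_t>\log c$.
\end{itemize}

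\textbf{Step 3: evaluate the terms with $t\ge2$.} Let $A=\{Y_1\le0,\ldots,Y_{t-1}\le0\}$. Then
\begin{align*}
\mathbb P(A,\ Y_t>\log c)=\mathbb P(A)-\mathbb P(A,\ Y_t\le\log c).
\end{align*}
The first probability is $\Phi_{t-1}(\mathbf 0_{t-1};\boldsymbol{\mu}_{t-1}^Y,\boldsymbol{\Sigma}_{t-1}^Y)$. This uses the fact that the marginal of the first $t-1$ coordinates of $N_t(\boldsymbol{\mu}_t^Y,\boldsymbol{\Sigma}_t^Y)$ is $N_{t-1}(\boldsymbol{\mu}_{t-1}^Y,\boldsymbol{\Sigma}_{t-1}^Y)$, which is consistent with \eqref{eq:816}. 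The second probability is $\Phi_t((\mathbf 0_{t-1}^\top,\log c)^\top;\boldsymbol{\mu}_t^Y,\boldsymbol{\Sigma}_t^Y)$. Summing over $t=1,\ldots,s$ gives \eqref{eq:exact_lender_repayment_shortfall_probability_revised}, with the sum empty when $s=1$.

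\textbf{Main subtlety.} The key point is Step 1, namely that the shortfall condition on $\{T=t\}$ collapses to the single threshold $Y_t>\log c$. This relies on $c>1$, which is assumed throughout. If $c\le1$ were allowed, the last coordinate threshold would instead be $\max\{0,\log c\}=0$.

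A further point concerns the valuation. Definition~\ref{def:repayment_shortfall} is stated in terms of values, so there is no need to handle the cap $a_t=\min\{\phi_t^{-1}(D_t),z\}$. Shortfall occurs precisely when the whole collateral $z$ is worth less than $D_t$, that is, when $\phi_t(z)<D_t$.
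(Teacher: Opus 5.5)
Your proposal is correct and follows essentially the same route as the paper's proof: use $c>1$ to reduce the shortfall event on $\{T=t\}$ to $\{Y_t>\log c\}$, decompose over the disjoint first-liquidation events, and write each term as a difference of Gaussian CDFs via Lemma~\ref{prop:multivariate_gaussian_representation}. The paper additionally spells out that the natural ending $T=\infty$ cannot cause a shortfall and that the benevolent and full-collateral liquidation rules produce the same shortfall event. Your argument handles both points implicitly, because it works directly with the value-based event $\{T\le s,\ \phi_T(z)<D_T\}$.
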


\begin{remark}[Lender control]
For a fixed admissible collateral amount $z$ set by the borrower, the value of $c$ and $\rho$ jointly control the value $y_0 = \log \frac{cv}{\phi_0(z)}$, subject to the admissibility constraint $c < \rho$.
In turn, the value $y_0$ controls $\boldsymbol{\mu}_t^Y$, which finally controls the probability of \eqref{eq:exact_lender_repayment_shortfall_probability_revised}.
Therefore, the lender can control this probability by setting $c$ and $\rho$ appropriately or by selecting an offer with such values.
\end{remark}

\section{Generalized Liquidation and Repayment Shortfall Probabilities}
\label{sec:non_gaussian_liquidation}
The preceding analysis derives exact liquidation and repayment shortfall probabilities under Gaussian log-valuation increments $\epsilon_t$. We now employ characteristic functions to generalize those results to arbitrary increments that are independent and identically distributed. 
This generalization most importantly allows us to select a suitable family of distributions based on prior knowledge about the valuation dynamics of the collateral.
The tradeoff is that the general results require numerical approximation of the corresponding integrals.


As before, we consider an ongoing loan that started at time $t_0=0$.
We take the observed origination value of $Y_0$ to be $y_0<0$, and recall that
\begin{align*}
    Y_t = y_0 + \sum_{k=1}^t \Delta Y_k, \quad \text{for }t=1,\ldots,d.
\end{align*}
We assume that the log-valuation increments
$\epsilon_1,\ldots,\epsilon_d$ are i.i.d., and denote the common
characteristic function by
$\varphi_\epsilon(\omega)\triangleq
\mathbb E[e^{i\omega\epsilon_1}]$.
In turn, the sequence $\Delta Y_t$ is also i.i.d. with a common characteristic function
\begin{align}
    f(\omega)
    \triangleq \mathbb E[e^{i\omega\Delta Y_1}]
    = e^{i\omega\log(1+r)}\varphi_\epsilon(-\omega).
    \label{eq:debt_adjusted_increment_cf}
\end{align}

Define the running maximum of the changes in $Y_t$ from its origination
value and its characteristic function by
\begin{align*}
    M_s\triangleq\max_{0\leq t\leq s}(Y_t-y_0),
    \qquad
    \chi_s(\omega)\triangleq\mathbb E[e^{i\omega M_s}],
    \qquad s=0,\ldots,d.
\end{align*}
For $t=1,\ldots,d$, also define the positive-part sequence and its characteristic functions:
\begin{align*}
(Y_t-y_0)_+ = \max\{0, Y_t-y_0\}, \quad 
    \psi_t(\omega)\triangleq
    \mathbb E[e^{i\omega(Y_t-y_0)_+}].
\end{align*}

\begin{theorem}[Liquidation probability under i.i.d. returns]
\label{thm:iid_increment_liquidation_probability}
Suppose that $\epsilon_1,\ldots,\epsilon_d$ are i.i.d. and that the distribution function of $M_s$ is continuous at $-y_0$
for every $s\in\{1,\ldots,d\}$. 
Then, the characteristic functions $\psi_t$ and $\chi_s$ are
\begin{align}
    \psi_t(\omega)
    &=
    \frac{1+f(\omega)^t}{2}
    +
    \frac{1}{2\pi i}\int_0^\infty
    \frac{
        [f(\omega+u)^t-f(\omega-u)^t]
        -
        [f(u)^t-f(-u)^t]
    }{u}\,\mathrm{d}u,
    \qquad t=1,\ldots,d.\label{eq:positive_partial_sum_cf_four_term}
\\
    \chi_0(\omega)&=1,
    \qquad
    \chi_s(\omega)
    =\frac1s\sum_{t=1}^{s}
    \psi_t(\omega)\chi_{s-t}(\omega),
    \qquad s=1,\ldots,d,
\label{eq:spitzer_finite_recursion}
\end{align}
Moreover, the liquidation
probability by time $s\in\{1,\ldots,d\}$ is
\begin{align}
    \mathbb P(T\leq s)
    =
    1-\frac{2}{\pi}\int_0^\infty
    \frac{\sin(-y_0\omega)}{\omega} \operatorname{Re}\chi_s(\omega)
    \,\mathrm{d}\omega.
\label{eq:exact_non_gaussian_liquidation_probability}
\end{align}
\end{theorem}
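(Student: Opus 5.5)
The proof has three independent parts: the positive-part characteristic function $\psi_t$, the Spitzer recursion for $\chi_s$, and the Gil-Pelaez inversion giving $\mathbb P(T\leq s)$.

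\emph{The formula for $\psi_t$.} Write $S_t\triangleq Y_t-y_0=\sum_{k=1}^t\Delta Y_k$. Its characteristic function is $f(\omega)^t$ by independence and \eqref{eq:debt_adjusted_increment_cf}. Split the expectation on the sign of $S_t$:
\begin{align*}
\psi_t(\omega)=\mathbb P(S_t\le 0)+\mathbb E\bigl[e^{i\omega S_t}\mathbf 1\{S_t>0\}\bigr].
\end{align*}
Pinelis' integral formula \cite[Eq.~16]{pinelis2015characteristic} expresses $\mathbb E[g(S)\mathbf 1\{S>0\}]$ through the characteristic function of $S$. It follows the Gil-Pelaez pattern, using $\mathbf 1\{x>0\}=\tfrac12+\tfrac1\pi\int_0^\infty \sin(ux)/u\,\mathrm du$ (up to the value at $x=0$). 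Multiply this indicator by $e^{i\omega x}$ and by $1$, subtract, and take expectations. Fubini is justified for the truncated integral $\int_\epsilon^{R}$, followed by a limit that uses the boundedness of $\int_0^R\sin(ux)/u\,\mathrm du$. This gives
\begin{align*}
\mathbb E\bigl[(e^{i\omega S_t}-1)\mathbf 1\{S_t>0\}\bigr]=\frac{f(\omega)^t-1}{2}+\frac{1}{2\pi i}\int_0^\infty\frac{[f(\omega+u)^t-f(\omega-u)^t]-[f(u)^t-f(-u)^t]}{u}\,\mathrm du.
\end{align*}
Adding $1$ yields \eqref{eq:positive_partial_sum_cf_four_term}, because $\psi_t=1+\mathbb E[(e^{i\omega S_t}-1)\mathbf 1\{S_t>0\}]$. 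Working with the combined integrand is what removes the atom issue at $S_t=0$: the function $e^{i\omega x}-1$ vanishes at $x=0$. The bracketed difference is $O(u)$ near $0$, since $\sin(u(\cdot))$ pairs with $e^{i\omega x}-1$, so the integral converges near the origin. Convergence at infinity is conditional, as in Dirichlet-type integrals.

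\emph{The recursion for $\chi_s$.} I would use Spitzer's identity \cite{spitzer1956combinatorial} in generating-function form. For $|z|<1$,
\begin{align*}
\sum_{s\ge0}z^s\chi_s(\omega)=\exp\Bigl(\sum_{t\ge1}\frac{z^t}{t}\psi_t(\omega)\Bigr).
\end{align*}
This applies here because $M_s=\max_{0\le t\le s}S_t$ is the running maximum of a random walk with i.i.d. steps. Call the left side $G(z)$ and the exponent $H(z)$. Differentiating in $z$ gives $G'=H'G$. Matching coefficients of $z^{s-1}$ gives $s\chi_s=\sum_{t=1}^s\psi_t\chi_{s-t}$ with $\chi_0=1$, which is \eqref{eq:spitzer_finite_recursion}. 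Only finitely many terms enter, so the identity holds for all $s\le d$ even though the walk is only defined up to $d$. One can extend the i.i.d. sequence beyond $d$ without changing $M_s$ for $s\le d$.

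\emph{The liquidation probability.} By \eqref{eq:decollateralization_time}, $\{T\le s\}=\{\max_{1\le t\le s}Y_t>0\}$. Because $y_0<0$ and $S_0=0$, this event equals $\{M_s>-y_0\}$. Let $a=-y_0>0$, and note that $a$ is a continuity point of the law of $M_s$ by hypothesis. The Gil-Pelaez theorem \cite{gilpelaez1951inversion} gives
\begin{align*}
\mathbb P(M_s\le a)=\frac12-\frac1\pi\int_0^\infty\frac{\operatorname{Im}[e^{-i\omega a}\chi_s(\omega)]}{\omega}\,\mathrm d\omega.
\end{align*}
This is not yet the cosine-free form in \eqref{eq:exact_non_gaussian_liquidation_probability}. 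The extra structure needed is $M_s\ge0$. Since $M_s$ is nonnegative, $\mathbb P(M_s\le a)-\mathbb P(M_s\le -a)=\mathbb P(M_s\le a)$, except for the atom at $0$ when $a$ is replaced by $0$, which is not an issue because $a>0$. I would therefore apply the symmetric version of the inversion formula:
\begin{align*}
\mathbb P(-a<M_s\le a)=\frac{1}{\pi}\int_0^\infty\frac{2\sin(a\omega)}{\omega}\operatorname{Re}\chi_s(\omega)\,\mathrm d\omega.
\end{align*}
This comes from $\frac1\pi\int_{-\infty}^\infty \frac{\sin(a\omega)}{\omega}\chi_s(\omega)\,\mathrm d\omega$ together with $\chi_s(-\omega)=\overline{\chi_s(\omega)}$, which kills the imaginary part. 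It is valid when $\pm a$ are continuity points; $-a$ trivially is, since $M_s\ge0$. The left side equals $\mathbb P(M_s\le a)$, and taking complements gives \eqref{eq:exact_non_gaussian_liquidation_probability}.

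The main obstacle is the first part. It requires justifying the Fubini and limit interchange in Pinelis' formula, handling a possible atom of $S_t$ at $0$, and keeping track of the $\tfrac12$ constants so that the stated four-term expression comes out exactly. The other two parts are standard once the right identities are invoked.
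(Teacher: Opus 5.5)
Your proposal is correct and follows the paper's proof in all three parts. Your Dirichlet-integral derivation of $\psi_t$ (truncating to $[\epsilon,R]$, passing to the limit by bounded convergence, and using $e^{i\omega x}-1$ to neutralize any atom of $Y_t-y_0$ at $0$) is a self-contained version of the Pinelis formula the paper simply cites; the Spitzer step is identical; and your symmetric L\'evy inversion on $(-a,a]$ with $\mathbb P(M_s\le -a)=0$ is exactly what the paper obtains by evaluating Gil-Pelaez at $\pm y_0$ and subtracting.

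One small caveat: your aside that the bracketed integrand is $O(u)$ near $0$ requires $\mathbb E|Y_t-y_0|<\infty$, which the theorem does not assume. Your truncation-plus-bounded-convergence argument already gives convergence at the origin without that assumption, so nothing is lost.
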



\begin{remark}[Continuity assumption]
For each $M_s$ to have a distribution function continuous at $-y_0$, it is sufficient that each $Y_t$ have no probability mass at $0$ for every
$t=1,\ldots,d$. For certain distributions of the i.i.d. increments $\epsilon_k$,
including Gaussian, gamma, Poisson, etc., their sum is well-known, allowing the distribution of $Y_t$ to be derived analytically. Otherwise,
repeated samples of $\epsilon_1,\ldots,\epsilon_t$ yield samples of $Y_t$,
which can be used to estimate its distribution and assess its continuity at $0$ by inspection.
\end{remark}


The integral in
\eqref{eq:exact_non_gaussian_liquidation_probability}
may appear similar to
\cite[Eq.~3.6]{abate1992fourier}
and \cite[Eq.~1]{nuttall1969numerical},
but the characteristic function $\chi_s$ in
\eqref{eq:spitzer_finite_recursion}
is computed using an approach similar to that of
Haslip and Kaishev \cite[Proposition~3.4]{haslip2014lookback},
who adopt {\"O}hgren's recursion \cite{ohgren2001remark}
and Spitzer's Identity.
Whereas Theorem~\ref{thm:borrower_liquidation_probability}
applies only to the Gaussian family of distributions,
Theorem~\ref{thm:iid_increment_liquidation_probability}
applies to every possible family.
This broader applicability however requires the numerical evaluation
of two integrals (\eqref{eq:positive_partial_sum_cf_four_term}
and \eqref{eq:exact_non_gaussian_liquidation_probability})
over unbounded domains and without analytical form solutions.
This introduces both truncation errors and quadrature errors
from discretizing the integrals.
These errors can be reduced by increasing the truncation
cutoffs and reducing the step sizes
\cite{abate1992fourier,nuttall1969numerical}.
A detailed analysis of these numerical errors is beyond the scope
of this paper.

To give the corresponding guarantee for lenders, we first define the sequence of joint characteristic functions
    \begin{align}
        \Gamma_s(u,\omega)
        &\triangleq
        \mathbb E[e^{iuM_s+i\omega(Y_s-y_0)}],
        \qquad (u,\omega)\in\mathbb R^2,\quad s=0,\ldots,d.
        \label{eq:lender_joint_cf}
    \end{align}
        
\begin{theorem}[Repayment shortfall probability under i.i.d. returns]
    \label{thm:iid_lender_shortfall_exact} Suppose that $\epsilon_1,\ldots,\epsilon_d$ are i.i.d. and that
    the distribution function of $M_t$ is continuous at both $-y_0$
    and $\log c-y_0$ for every $t\in\{1,\ldots,d\}$.
    Let $f$, $\psi_t$, and $\chi_t$ be the characteristic functions
    introduced before Theorem~\ref{thm:iid_increment_liquidation_probability}.
Then, the characteristic functions $\Gamma_s$ are
    \begin{align}
        \Gamma_0(u,\omega)
        &=1,
        \nonumber
        \\
        \Gamma_s(u,\omega)
        &=\frac{1}{s}\sum_{t=1}^{s}
        \bigl[\psi_t(u+\omega)+f(\omega)^t-\psi_t(\omega)\bigr]
        \Gamma_{s-t}(u,\omega),
        \qquad s=1,\ldots,d.
        \label{eq:lender_joint_recursion}
    \end{align}
    Moreover, for $s=1,\ldots,d$, the repayment shortfall
    probability is
    \begin{align}
&\mathbb P(T\le s,\ \phi_T(z)<D_T)
=\sum_{t=1}^{s}\Bigg[
    \frac{1}{\pi}\int_{0}^{\infty}
    \frac{\sin(-y_0 \xi)}{\xi}\,\operatorname{Re}\chi_{t-1}(\xi)
    \,d\xi
\nonumber\\
&+\frac{1}{\pi^2}\int_{0}^{\infty}
    \operatorname{Im}\Biggl\{
    e^{-i\omega(\log c-y_0)}\frac{f(\omega)}{\omega}
    \int_{0}^{\infty}
    \frac{\sin(-y_0u)}{u}
    \bigl[
        \Gamma_{t-1}(u,\omega)
        +\Gamma_{t-1}(-u,\omega)
    \bigr]\,du
    \Biggr\}d\omega
    \Bigg].\label{eq:lender_exact_probability_im}
\end{align}
\end{theorem}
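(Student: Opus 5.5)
We prove the two parts in order: first the recursion \eqref{eq:lender_joint_recursion}, then the probability formula \eqref{eq:lender_exact_probability_im}.

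\textbf{Recursion for $\Gamma_s$.} Write $S_t\triangleq Y_t-y_0$, a random walk with i.i.d.\ increments of characteristic function $f$. The tool is the bivariate form of Spitzer's identity, as used by \cite{petrella2004numerical}:
\begin{align*}
\sum_{s\ge0}\lambda^s\,\mathbb E\bigl[e^{iuM_s+i\omega S_s}\bigr]
=\exp\Bigl(\sum_{t\ge1}\frac{\lambda^t}{t}\,\mathbb E\bigl[e^{iuS_t^++i\omega S_t}\bigr]\Bigr).
\end{align*}
Decomposing on $\{S_t>0\}$ and $\{S_t\le 0\}$ gives
\begin{align*}
e^{iuS_t^++i\omega S_t}=e^{i(u+\omega)S_t^+}+e^{i\omega S_t}-e^{i\omega S_t^+},
\end{align*}
since on $\{S_t>0\}$ the right side is $e^{i(u+\omega)S_t}+e^{i\omega S_t}-e^{i\omega S_t}$, and on $\{S_t\le0\}$ it is $1+e^{i\omega S_t}-1$. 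Taking expectations, the inner coefficient is $a_t\triangleq\psi_t(u+\omega)+f(\omega)^t-\psi_t(\omega)$. Set $G(\lambda)=\exp(A(\lambda))$ with $A(\lambda)=\sum_t a_t\lambda^t/t$. Then $G'=A'G$, and matching coefficients of $\lambda^{s-1}$ gives $s\Gamma_s=\sum_{t=1}^s a_t\Gamma_{s-t}$, which is \eqref{eq:lender_joint_recursion}. This is the same finite-coefficient argument that produced \eqref{eq:spitzer_finite_recursion}, so only finitely many coefficients are involved and no convergence issue arises. The $\psi_t$ are supplied by \eqref{eq:positive_partial_sum_cf_four_term}.

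\textbf{Shortfall probability.} Under liquidation, $\phi_T(z)<D_T$ is equivalent to $Y_T>\log c$. Hence the event $\{T=t,\ \phi_T(z)<D_T\}$ is
\begin{align*}
\{M_{t-1}\le -y_0,\ S_{t-1}+\Delta Y_t>\log c-y_0\}.
\end{align*}
Here $Y_t>\log c>0$ automatically triggers liquidation, and $T\ge t$ is the same as $M_{t-1}\le -y_0$. Since $\Delta Y_t$ is independent of $(M_{t-1},S_{t-1})$, the characteristic function of the triple factorizes as $\Gamma_{t-1}(u,\omega)f(\omega)$ in the relevant variables. The plan is to write
\begin{align*}
\mathbb P(\cdot)=\mathbb P(M_{t-1}\le -y_0)-\mathbb P\bigl(M_{t-1}\le -y_0,\ S_{t-1}+\Delta Y_t\le \log c-y_0\bigr).
\end{align*}
The first term is handled by Gil-Pelaez inversion of $\chi_{t-1}$, exactly as in \eqref{eq:exact_non_gaussian_liquidation_probability}. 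This yields $\tfrac12+\tfrac1\pi\int_0^\infty\sin(-y_0\xi)\operatorname{Re}\chi_{t-1}(\xi)/\xi\,d\xi$, using that $M$ is nonnegative so the imaginary parts combine. For the second term we use the bivariate Gil-Pelaez (Shephard) formula for $\mathbb P(X\le a,\,Z\le b)$ in terms of the joint characteristic function $\Gamma_{t-1}(u,\omega)f(\omega)$.

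\textbf{Main obstacle.} The hard step is massaging the bivariate inversion into the stated form. That form has an inner sine integral in $u$ built from the symmetrization $\Gamma_{t-1}(u,\omega)+\Gamma_{t-1}(-u,\omega)$, and an outer $\operatorname{Im}\{e^{-i\omega(\log c-y_0)}\cdots/\omega\}$ in $\omega$. The approach is to treat it as an iterated one-dimensional inversion. First invert in $\omega$ conditionally on the event $\{M_{t-1}\le -y_0\}$. Its characteristic function restricted to that event is represented by the Gil-Pelaez transform in $u$:
\begin{align*}
\mathbb E\bigl[e^{i\omega Z}\mathbf 1\{M\le a\}\bigr]=\tfrac12\mathbb E e^{i\omega Z}+\tfrac1{2\pi}\int_0^\infty\frac{\sin(au)}{u}\bigl[\Gamma(u,\omega)+\Gamma(-u,\omega)\bigr]du,
\end{align*}
up to the constant/half-mass terms. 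The derivation of this identity uses Fubini and Dirichlet's integral. Plugging it into the univariate Gil-Pelaez formula in $\omega$ gives the stated double integral. Along the way, the constant terms $\tfrac12$ and the unconditional $f(\omega)^t$ contributions should cancel against the $\tfrac12$ from the first term, and this bookkeeping must be checked carefully. The continuity hypotheses at $-y_0$ and $\log c-y_0$ justify both inversions: they ensure the target points are continuity points. The interchange of the $u$- and $\omega$-integrals is taken in the improper (principal value) sense. Summing over $t=1,\ldots,s$ gives \eqref{eq:lender_exact_probability_im}.
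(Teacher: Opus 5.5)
Your derivation of the recursion \eqref{eq:lender_joint_recursion} is correct and is in substance the paper's (bivariate Spitzer identity plus a positive/negative-part decomposition). The gap is in the shortfall part. Two of your displayed formulas are wrong, and the bookkeeping you defer cannot be completed with them.

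First, inverting $\chi_{t-1}$ does not give $\tfrac12+\tfrac1\pi\int_0^\infty\sin(-y_0\xi)\operatorname{Re}\chi_{t-1}(\xi)\,\xi^{-1}d\xi$; that quantity equals $\tfrac12+\tfrac12\mathbb P(M_{t-1}\le-y_0)$. Because $M_{t-1}\ge0$, subtracting the Gil-Pelaez formula at $y_0$ (where the distribution function vanishes) from the one at $-y_0$ removes both the constant and the imaginary part. This gives $\mathbb P(M_{t-1}\le-y_0)=\tfrac2\pi\int_0^\infty\sin(-y_0\xi)\operatorname{Re}\chi_{t-1}(\xi)\,\xi^{-1}d\xi$, as in \eqref{eq:exact_non_gaussian_liquidation_probability}.

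Second, and this is the missing idea, your restricted characteristic function grafts the constant of the one-sided Gil-Pelaez formula onto the symmetric sine kernel. The correct identity is pathwise: since $M_{t-1}\ge0$ and $-y_0>0$,
\[
\begin{aligned}
\frac1\pi\int_0^\infty\frac{\sin(-y_0u)}{u}\bigl[e^{iuM_{t-1}}+e^{-iuM_{t-1}}\bigr]du
&=\frac1\pi\int_0^\infty\frac{\sin[(-y_0+M_{t-1})u]+\sin[(-y_0-M_{t-1})u]}{u}\,du\\
&=\tfrac12+\tfrac12\operatorname{sgn}(-y_0-M_{t-1}).
\end{aligned}
\]
The first sine always has positive frequency and supplies the $\tfrac12$ by itself. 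So this is $\mathbf 1_{\{M_{t-1}\le-y_0\}}$ almost surely (by the continuity hypothesis), with nothing left over. Hence
\[
g_t(\omega)\triangleq\mathbb E\bigl[e^{i\omega(Y_t-y_0)}\mathbf 1_{\{M_{t-1}\le-y_0\}}\bigr]
=\frac{f(\omega)}{\pi}\int_0^\infty\frac{\sin(-y_0u)}{u}\bigl[\Gamma_{t-1}(u,\omega)+\Gamma_{t-1}(-u,\omega)\bigr]du,
\]
with coefficient $1/\pi$ and no $\tfrac12 f(\omega)^t$ term; your expression equals $\tfrac12f(\omega)^t+\tfrac12g_t(\omega)$. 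Taking $\mathbb E$ inside the improper $u$-integral needs Fubini on $[0,U]$ plus bounded convergence, using the uniform bound $\bigl|\tfrac1\pi\int_0^U\sin(\beta u)\,u^{-1}du\bigr|\le1$. Dirichlet's integral alone does not justify it.

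With $g_t$ in hand, no cancellation of constants is needed. Gil-Pelaez applied to the finite measure $B\mapsto\mathbb P(M_{t-1}\le-y_0,\ Y_t-y_0\in B)$ at $\log c-y_0$ gives directly
\[
\mathbb P(T=t,\ \phi_t(z)<D_t)=\tfrac12\,\mathbb P(M_{t-1}\le-y_0)+\frac1\pi\int_0^\infty\frac1\omega\operatorname{Im}\bigl\{e^{-i\omega(\log c-y_0)}g_t(\omega)\bigr\}\,d\omega.
\]
Here $\tfrac12\mathbb P(M_{t-1}\le-y_0)$ is exactly the first integral in \eqref{eq:lender_exact_probability_im}. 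The statement's first term is half the survival probability, not the survival probability from which something is subtracted.

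Your two formulas, by contrast, both describe the mixture placing weight $\tfrac12$ on the unrestricted law and $\tfrac12$ on the survival-restricted one. Carried through your split, they yield $\tfrac12\mathbb P(Y_t>\log c)+\tfrac12\mathbb P(T=t,\ \phi_t(z)<D_t)$, which overshoots by $\tfrac12\mathbb P(M_{t-1}>-y_0,\ Y_t>\log c)$. The $\tfrac12$'s you expect to cancel do not. Both formulas are exact when survival is certain, e.g.\ at $t=1$, so a check there will not reveal the error.

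Two smaller points. The continuity needed at $\log c-y_0$ is for the restricted measure above. You must derive it from the hypothesis on $M_t$ via $\{M_{t-1}\le-y_0,\ Y_t-y_0=\log c-y_0\}\subseteq\{M_t=\log c-y_0\}$, using $\log c>0$. Also, no interchange of the $u$- and $\omega$-integrals is required, since the $u$-integral simply sits inside $g_t$.
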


\subsection{Empirical Liquidation and Repayment Shortfall Probabilities}
\label{subsec:empirical_liquidation_probability}

To employ Theorems~\ref{thm:iid_increment_liquidation_probability} and~\ref{thm:iid_lender_shortfall_exact}, we need to know the distribution or characteristic function $\varphi_\epsilon$ for the sequence $\epsilon_t$.
In this section, we consider the inference problem where this distribution is unknown and must be inferred from historical samples. The first step is to select the most suitable one from a set of three candidate parametric distributions by estimating the corresponding parameters. The second step is to calculate the liquidation probability and repayment shortfall probabilities by plugging-in this distribution into
Theorems~\ref{thm:iid_increment_liquidation_probability} and~\ref{thm:iid_lender_shortfall_exact}, respectively.

To illustrate these two steps, we use daily BTC-USD returns collected from September~7,
2021 to September~6, 2026 to fit Gaussian, Student-$t$, and symmetric
stable Paretian distributions. Mandelbrot \cite{mandelbrot1963variation} proposes stable Paretian distributions
for returns to account for departures from Gaussian behavior,
such as unusually large price movements.
Blattberg and Gonedes \cite{blattberg1974comparison} finds that empirical returns are more
Student-$t$ than symmetric stable Paretian, while
Chu et al.\ \cite{chu2015statistical} reports a better fit for Student-$t$ than Gaussian.
We select as the most suitable distribution the one with lowest AIC \cite{akaike1974new}---or lowest BIC \cite{schwarz1978estimating}. For completeness sake, we  compare the liquidation and repayment shortfall
probabilities obtained under the three fitted distributions.
In the Gaussian case, we also show that successive approximations of the liquidation and repayment shortfall probabilities of Theorems~\ref{thm:iid_increment_liquidation_probability} and~\ref{thm:iid_lender_shortfall_exact} approach the corresponding exact probabilities of Theorems~\ref{thm:borrower_liquidation_probability} and~\ref{thm:lender_repayment_shortfall_probability} as we increase truncation levels.

We estimate the parameters of the Gaussian and Student-$t$
distributions by maximum likelihood. For the symmetric stable Paretian
distribution, we perform maximum likelihood estimation using SciPy
\cite{2020SciPy-NMeth}, which implements Nolan's numerical method
\cite{nolan1997numerical}.
Table~\ref{tab:btc_return_mle} reports the maximum-likelihood estimates
$\widehat{\alpha}_1,\ldots,\widehat{\alpha}_8$ of the distribution
parameters $\alpha_1,\ldots,\alpha_8$, together with the corresponding estimated daily
drift $\mathbb E[\Delta Y_t]$, AIC, and BIC. Both AIC and BIC favor the Student-$t$ distribution.

In Figure~\ref{fig:3}, we consider an example loan with a duration of $d=30$ days, an annual effective interest rate of $6.00\%$, a required initial collateral ratio of $\rho=1.20$, and a liquidation threshold of $c=1.05$. The initial collateral is set at the minimum admissible level such that $\phi_0(z)=\rho v$, and the daily interest rate is $r=(1.06)^{1/365}-1$. Figure~\ref{fig:btc_liquidation_probability_comparison} compares the liquidation probabilities obtained from Theorem~\ref{thm:iid_increment_liquidation_probability} under the three return distributions from Table~\ref{tab:btc_return_mle}, with truncation levels $U=\Omega=1000$ for the inner and outer integrals in \eqref{eq:positive_partial_sum_cf_four_term} and \eqref{eq:exact_non_gaussian_liquidation_probability}. For short time horizons, both Student-$t$ and symmetric stable Paretian distributions give higher estimated liquidation probabilities than Gaussian distribution, which is consistent with the fact that they have heavier tails. The symmetric stable Paretian distribution gives the highest estimates for the liquidation probability throughout the thirty-day horizon, reaching a liquidation probability of $34.55\%$ within thirty days from loan origination. The Gaussian and Student-$t$ gives similar estimates for the liquidation probability, especially for long time horizons: the liquidation probabilities are $31.03\%$ and $30.89\%$, respectively, within 30 days. Next, we examine the numerical approximation in Theorem~\ref{thm:iid_increment_liquidation_probability} in the Gaussian case, for which Theorem~\ref{thm:borrower_liquidation_probability} provides an exact analytical baseline. We approximate \eqref{eq:exact_non_gaussian_liquidation_probability} at truncation levels $U=\Omega\in\{10,100,1000\}$. Figure~\ref{fig:btc_gaussian_cutoff_comparison} shows that, as the truncation levels increase, the numerical estimates converge to the analytical Gaussian probabilities as expected.

Figure~\ref{fig:btc_lender_shortfall_distribution_comparison} compares the repayment shortfall probabilities obtained from Theorem~\ref{thm:iid_lender_shortfall_exact} under the three return distributions from Table~\ref{tab:btc_return_mle}, with truncation levels $U=\Omega=1000$. The Gaussian, Student-$t$, and symmetric stable Paretian distributions give estimated repayment shortfall probabilities of $0.90\%$, $4.92\%$, and $9.90\%$, respectively, within 30 days. Using the same fitted Gaussian distribution and loan parameters, we next evaluate the repayment shortfall probability using Theorem~\ref{thm:iid_lender_shortfall_exact}. Figure~\ref{fig:btc_lender_shortfall_gaussian_cutoff_comparison} compares numerical evaluations of \eqref{eq:lender_exact_probability_im} at $U=\Omega\in\{100,1000\}$ with the analytical Gaussian probabilities of Theorem~\ref{thm:lender_repayment_shortfall_probability}.The case $U=\Omega=10$ is omitted because the numerical estimates deviate substantially from the analytical Gaussian probabilities. The integral defining $\psi_t$ in \eqref{eq:positive_partial_sum_cf_four_term} and the $u$ integrals in \eqref{eq:lender_exact_probability_im} are truncated at $U$, while the outer $\omega$ integral is truncated at $\Omega$. The repayment shortfall probability is lower than the liquidation probability throughout the thirty-day horizon. For example, on day~30, the Gaussian benchmarks give a repayment shortfall probability of $0.90\%$, compared with a liquidation probability of $31.04\%$.

\begin{table}[htbp]
\centering
\caption{Parameter estimates and model selection.}
\label{tab:btc_return_mle}
\small
\setlength{\tabcolsep}{2.5pt}
\renewcommand{\arraystretch}{1.25}

\sbox{\btcMLEtablebox}{%
    \begin{tabular}{@{}llrrr@{}}
        \toprule
        Distribution
        & \begin{tabular}[t]{@{}l@{}}
            Parameter\\[1pt]
            estimates
          \end{tabular}
        & \begin{tabular}[t]{@{}r@{}}
            Daily drift estimates\\[1pt]
            $\mathbb E[\Delta Y_t]\;(\times10^{-4})$
          \end{tabular}
        & AIC
        & BIC \\
        \midrule
        \addlinespace[3pt]

        Gaussian
        & $\begin{aligned}[t]
            \widehat{\alpha}_1 &= 2.92\times10^{-4}\\
            \widehat{\alpha}_2 &= 2.72\times10^{-2}
          \end{aligned}$
        & $-1.32$
        & $-7967.82$
        & $-7956.80$ \\
        \addlinespace[3pt]
        \midrule
        \addlinespace[3pt]

        Student-$t$
        & $\begin{aligned}[t]
            \widehat{\alpha}_3 &= 2.86\\
            \widehat{\alpha}_4 &= 1.95\times10^{-4}\\
            \widehat{\alpha}_5 &= 1.71\times10^{-2}
          \end{aligned}$
        & $-0.35$
        & $\mathbf{-8301.38}$
        & $\mathbf{-8284.85}$ \\
        \addlinespace[3pt]
        \midrule
        \addlinespace[3pt]

        \begin{tabular}[t]{@{}l@{}}
            Symmetric stable\\
            Paretian
        \end{tabular}
        & $\begin{aligned}[t]
            \widehat{\alpha}_6 &= 1.55\\
            \widehat{\alpha}_7 &= 1.40\times10^{-2}\\
            \widehat{\alpha}_8 &= 2.35\times10^{-4}
          \end{aligned}$
        & $-0.75$
        & $-8262.35$
        & $-8245.82$ \\
        \addlinespace[3pt]
        \bottomrule
    \end{tabular}%
}

\begin{minipage}{\wd\btcMLEtablebox}
    \noindent\usebox{\btcMLEtablebox}
    \par\smallskip
    {\footnotesize\raggedright
    The parameters are: Gaussian, $\alpha_1$ (mean) and $\alpha_2$
    (standard deviation); Student-$t$, $\alpha_3$ (degrees of freedom),
    $\alpha_4$ (location), and $\alpha_5$ (scale); symmetric stable
    Paretian, $\alpha_6$ (stability index), $\alpha_7$ (scale), and
    $\alpha_8$ (location).
    \par}
\end{minipage}
\end{table}

In practice, we also recommend updating the
estimated probabilities at each  checkpoint (with the last realization of $Y_t$) for improved prediction accuracy. Although checkpoints occur at a daily frequency in our example, the same approach holds when checkpoints occur at the same frequency as price updates by an oracle.

\begin{figure}[htbp]
    \centering
    \captionsetup{font=small,margin=5mm}
    \captionsetup[subfigure]{
        font=small,
        labelfont=normalfont,
        labelformat=parens,
        labelsep=none,
        justification=centering,
        singlelinecheck=true,
        skip=6pt,
        margin=0pt
    }
    \renewcommand{\thesubfigure}{\Alph{subfigure}}

    \begin{subfigure}[t]{0.49\textwidth}
        \centering
        \includegraphics[width=\linewidth]{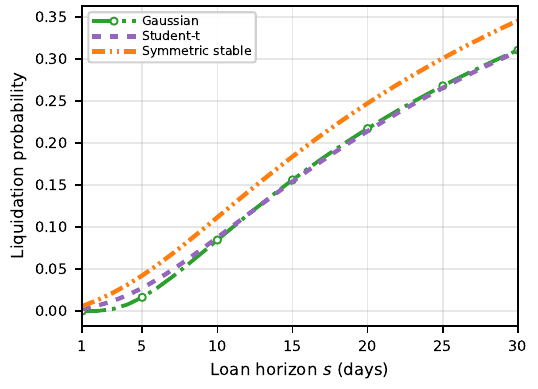}
        \caption{}
        \label{fig:btc_liquidation_probability_comparison}
    \end{subfigure}\hfill
    \begin{subfigure}[t]{0.49\textwidth}
        \centering
        \includegraphics[width=\linewidth]{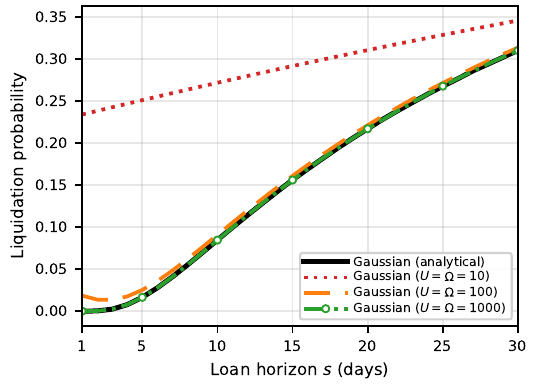}
        \caption{}
        \label{fig:btc_gaussian_cutoff_comparison}
    \end{subfigure}

    \par\medskip

    \begin{subfigure}[t]{0.49\textwidth}
        \centering
        \includegraphics[width=\linewidth]{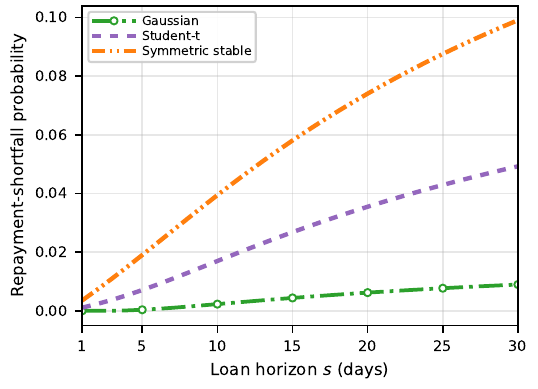}
        \caption{}
        \label{fig:btc_lender_shortfall_distribution_comparison}
    \end{subfigure}\hfill
    \begin{subfigure}[t]{0.49\textwidth}
        \centering
        \includegraphics[width=\linewidth]{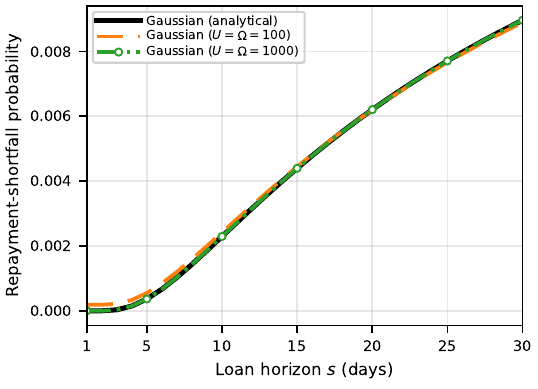}
        \caption{}
        \label{fig:btc_lender_shortfall_gaussian_cutoff_comparison}
    \end{subfigure}

    \caption{Liquidation probabilities (A, B) and repayment shortfall
    probabilities (C, D) over thirty days. Panels (A, C) compare Gaussian,
    Student-$t$, and symmetric stable Paretian return distributions
    using $U=\Omega=1000$. Panels (B, D) compare analytical Gaussian
    probabilities from Theorems~\ref{thm:borrower_liquidation_probability}
    and~\ref{thm:lender_repayment_shortfall_probability} with numerical
    estimates from Theorems~\ref{thm:iid_increment_liquidation_probability}
    and~\ref{thm:iid_lender_shortfall_exact}, respectively, using
    $U=\Omega\in\{10,100,1000\}$ in (B) and
    $U=\Omega\in\{100,1000\}$ in (D). All numerical estimates use a base
    quadrature step size of $0.5$.}
    \label{fig:3}
\end{figure}

\color{black}

\section{Related Works}\label{sec:related}

Our lending protocol has a unique set of features: it is a permissionless marketplace, with fixed rate and fixed maturity. Each of these features connects it to existing protocols, but together distinguish it from most.
By virtue of being a marketplace that matches lenders and borrowers to create fixed-rate fixed-term loans, it is reminiscent of traditional loans with well-defined fixed-rate, like the USD 28 trillion of US Treasury securities outstanding today \cite{finra}.
In the latter setting, there is a single borrower who issues bills through discriminatory and uniform-price auction \cite{BH15,goldreich} among lenders.
In our protocol, the interest rate of ongoing loans is determined by a different competition, where lenders and borrowers dynamically cancel and resubmit offers, and loans are taken-up instantaneously on a first-come-first-served basis, much like limit orders in order books \cite{cont2023mathematicalframeworkmodellingorder}.
Other examples of fixed-rate lending marketplaces include permissioned protocols, such as Prosper \cite{chen2014auctions} and LendingClub, and permissionless protocols, such as ETHLend \cite{ETHLend}. 

Our lending protocol is most similar to Morpho Midnight, which  introduces intent-based, fixed-term, fixed-rate markets in which lenders and borrowers express offers with acceptable terms \cite{morpho2025fixedrate,morpho2026midnight}.
Other fixed-rate and fixed-term lending protocols have recently been proposed, emphasizing under-collateralized borrowing \cite{wildcatdocs} or tokenized fixed-rate instruments \cite{termmaxdocs}. The main distinction of our protocol are the performance guarantees for both borrowers and lenders.



Other existing permissionless lending protocols are pool-based (e.g., Aave, Compound) with variable interest rates, or auction-based (Term Finance \cite{term}) with fixed rates.
In contrast to lending pools\footnote{For ease of comparison, we describe lending pools in Appendix~\ref{sec:pool} using the same section headings as Section~\ref{sec:marketplace}.}, our protocol is impervious to interest-rate manipulations by other users manipulating the utilization rate  \cite{hps,bl25} or by the lending-pool authority (e.g., Aave, Compound). Our interest rates are fixed, whereas pool-based lending protocols have variable interest rates that change with utilization \cite{bertucci2025agents} and a few lenders often control a majority of the liquidity \cite{gudgeon2020defiprotocolsloanablefunds}.
The interest rate for ongoing loans are products of a dynamic competition among participants, which is arguably more fair than the arbitrary rates \cite{hps} on lending pools---e.g., Chaos Labs set the interest rate for Aave until April 2026 \cite{chaos}.
A borrower can first look outside our protocol for the best available rate, and then publish an offer to borrow or immediately take up an offer to lend that is better than elsewhere.
Since our loans have a fixed duration $d$, we avoid redemption bottlenecks due to insufficient reserves in lending pool.
By setting a lower bound on allowed values for the duration $d$, our protocol can prevent flash loan, and the associated attacks \cite{qz}.

Most of the technical literature on lending protocols uses a continuous-time model (e.g., Brownian motion) for the valuation of tokens with respect to a num\'eraire (e.g., USD).
The probability of liquidation for a continuous-time lending protocol with zero drift is analyzed in \cite{belenko2025defiliquidationriskmodeling}.
The optimal collateral amount is derived in \cite{MP24}.
An alternative lending pool is proposed to address the impossibility of fair interest rates in existing lending pools \cite{hps}.
A dynamic equilibrium model is proposed in \cite{chiu} to predict equilibria that arise over time.
Discrete-time models similar to ours are presented in \cite{bastankhah2024thinkingfastslowdatadriven,bl25}.
The problem of slow convergence of interest rates to a market equilibrium is highlighted in \cite{bastankhah2024thinkingfastslowdatadriven}.
A detailed model of lending pools is presented in \cite{bl25} along with major features.

\section{Open Problems and Future Work}\label{sec:open}

This paper takes a first step in providing performance guarantees that enable rigorous risk management in permissionless decentralized finance.
There are of course many important directions for future work.

For instance, how do we generalize our results a setting with  stochastic gas fees?
How do we extend the performance guarantees of Theorem~\ref{thm:iid_increment_liquidation_probability} to returns that are not i.i.d.
Since the order of transactions on the blockchain are determined by gas fees, can we use a commitment scheme to protect participants from front-running \cite{kulkarni2023theorymaximalextractablevalue}.
How do we quantify the dynamics of participants that observe ongoing loan contracts and use t\^atonnement-like \cite{walras1874elements} offers to gauge market demand, as in \cite{cont2023mathematicalframeworkmodellingorder}.
How do we model the error between the valuation process $\phi_t$ and the actual proceeds of converting the collateral \textsc{col} to \textsc{liq}?

Although these guarantees are useful once two participants have already taken actions to contract a loan, other performance guarantees are beyond the scope of this paper.
We do not give guidance to the participants on when to take up an offer, versus creating an offer, or waiting for a better offer to appear. We do not characterize how quickly offers turn into loans, nor the convergence of interest rates of similar offers.

\bibliographystyle{alpha}
\bibliography{sample}

\clearpage
\appendix
\numberwithin{equation}{section}

\section{Figures and Tables}

\begin{table}[htbp]\caption{Notation for loan contract}
\begin{center}
\begin{tabular}{r c p{7cm} }
\toprule
\(w_L\) & \(\triangleq\) & lender wallet\\
\(w_B\) & \(\triangleq\) & borrower wallet\\
\(w_V\) & \(\triangleq\) & protocol-created vault wallet associated with this offer or loan\\
\(t_0\) & \(\triangleq\) & start of loan\\
\(v\) & \(\triangleq\) & \textsc{liq} principal amount\\
\(z\) & \(\triangleq\) & \textsc{col} collateral amount\\
\(d\) & \(\triangleq\) & duration of loan\\
\(r\) & \(\triangleq\) & interest rate for each time step\\
\(\rho\) & \(\triangleq\) & minimum required initial collateral-to-loan ratio at time \(t_0\)\\
\(c\) & \(\triangleq\) & liquidation threshold\\
\bottomrule
\end{tabular}
\end{center}
\label{tab:TableOfNotation}
\end{table}

\section{Proofs}
\label{app:proofs}
\subsection{Proof of Lemma~\ref{prop:multivariate_gaussian_representation}}
\label{app:proof_gaussian_representation}

\begin{proof}
By \eqref{eq:Y_gaussian_representation},
\[
    Y_i
    =y_0+i\log(1+r)-\sum_{k=1}^{i}\epsilon_k,
    \qquad i=1,\ldots,t.
\]
Thus, $(Y_1,\ldots,Y_t)^\top$ is an affine transformation of
the independent Gaussian vector
$(\epsilon_1,\ldots,\epsilon_t)^\top$ and is therefore jointly
Gaussian.

Its mean is
\begin{align*}
    \mathbb E[Y_i]
    &=y_0+i\log(1+r)-\sum_{k=1}^{i}\mu\\
    &=y_0+i\bigl(\log(1+r)-\mu\bigr).
\end{align*}
For $1\leq i,j\leq t$, independence of the increments gives
\begin{align*}
    \operatorname{Cov}(Y_i,Y_j)
    &=
    \sum_{k=1}^{i}\sum_{\ell=1}^{j}
    \operatorname{Cov}(\epsilon_k,\epsilon_\ell)\\
    &=
    \sum_{k=1}^{\min\{i,j\}}\sigma^2\\
    &=\sigma^2\min\{i,j\}.
\end{align*}
These are the stated mean vector and covariance matrix.
\end{proof}

\subsection{Proof of Theorem~\ref{thm:borrower_liquidation_probability}}
\label{app:proof_borrower_liquidation}

\begin{proof}
At the first checkpoint, liquidation occurs when $Y_1>0$. Therefore,
\begin{align}
    \mathbb P(T=1)
    =
    1-
    \Phi_1\left(
    0;
    \boldsymbol{\mu}_1^Y,
    \boldsymbol{\Sigma}_1^Y
    \right).
\end{align}
For each $t\in\{2,\ldots,d\}$, first liquidation at time $t$
requires that no liquidation occur through time $t-1$ and that
liquidation occur at time $t$. By
Lemma~\ref{prop:multivariate_gaussian_representation},
\begin{align}
    \begin{aligned}
    \mathbb P(T=t)
    &=
    \Phi_{t-1}\left(
    \mathbf 0_{t-1};
    \boldsymbol{\mu}_{t-1}^Y,
    \boldsymbol{\Sigma}_{t-1}^Y
    \right)-
    \Phi_t\left(
    \mathbf 0_t;
    \boldsymbol{\mu}_t^Y,
    \boldsymbol{\Sigma}_t^Y
    \right).
    \end{aligned}
\end{align}
The first CDF is the probability that
$Y_1,\ldots,Y_{t-1}$ are all nonpositive, so no liquidation occurs
through time $t-1$. The second CDF is the probability that
$Y_1,\ldots,Y_t$ are all nonpositive, so no liquidation occurs
through time $t$. Their difference is therefore the probability that
liquidation first occurs at time $t$.

Because the first-liquidation events are mutually disjoint, summing
their probabilities gives, for every $s\in\{2,\ldots,d\}$,
\begin{align}
    \begin{aligned}
    \mathbb P(T\leq s)
    &=
    1-
    \Phi_1\left(
    0;
    \boldsymbol{\mu}_1^Y,
    \boldsymbol{\Sigma}_1^Y
    \right)
    \\
    &\quad+
    \sum_{t=2}^{s}
    \Bigg[
    \Phi_{t-1}\left(
    \mathbf 0_{t-1};
    \boldsymbol{\mu}_{t-1}^Y,
    \boldsymbol{\Sigma}_{t-1}^Y
    \right)-
    \Phi_t\left(
    \mathbf 0_t;
    \boldsymbol{\mu}_t^Y,
    \boldsymbol{\Sigma}_t^Y
    \right)
    \Bigg]
    \\
    &=
    1-
    \Phi_s\left(
    \mathbf 0_s;
    \boldsymbol{\mu}_s^Y,
    \boldsymbol{\Sigma}_s^Y
    \right).
    \end{aligned}
\end{align}
The intermediate CDF terms cancel telescopically. Notice that the
result also applies to the case $s=1$. Consequently, we have
\eqref{eq:exact_borrower_liquidation_probability} holds for every
$s\in\{1,\ldots,d\}$.
\end{proof}
\subsection{Proof of Theorem~\ref{thm:lender_repayment_shortfall_probability}}
\label{app:proof_lender_repayment_shortfall}
\begin{proof}
Notice that no repayment shortfall can occur when $T=\infty$, which corresponds to a natural ending.
Section~\ref{sec:transition} specifies the terminal settlements under
both natural ending and liquidation. At time $d$, the protocol first
evaluates the liquidation condition using $\phi_d$. If the condition is
satisfied, the loan is liquidated; otherwise, the natural-ending
settlement is immediately executed using the same collateral valuation.
Therefore, $T=\infty$ implies that the liquidation condition is not
satisfied at time $d$, and \eqref{eq:decollateralization_time} gives
$Y_d\leq0$. Hence,
\begin{align}
    \phi_d(z)\geq cD_d>D_d.
\end{align}
Under the natural-ending settlement, the collateral transferred to the
lender consequently has value
\begin{align}
    \min\left\{D_d,\phi_d(z)\right\}=D_d
\end{align}
at time $d$. Thus, the lender is fully repaid at natural ending, and
the event $T=\infty$ cannot produce a repayment shortfall.

For a finite liquidation time $T=t$, the protocol instead applies the liquidation rule, the
lender receives only the amount of collateral needed to cover the
accrued debt, capped by the collateral available in the vault, and any
excess collateral is returned to the borrower. Under full-collateral
liquidation rule of Remark~\ref{rem:full}, all collateral held in the vault is transferred to the
lender, and no collateral is returned to the borrower. Accordingly, let $R_t^{\mathrm{ben}}$ and $R_t^{\mathrm{full}}$
denote the asset values, evaluated at liquidation time $t$,
of the collateral received by the lender under benevolent
and full-collateral liquidation, respectively. Then
\begin{align}
    R_t^{\mathrm{ben}}
    &=
    \min\left\{D_t,\phi_t(z)\right\},
    \quad
    R_t^{\mathrm{full}}
    =
    \phi_t(z).
\end{align}
Conditional on $T=t$, the shortfall condition under each rule satisfies
\begin{align}
    R_t^{\mathrm{ben}}<D_t
    &\Longleftrightarrow
    \min\left\{D_t,\phi_t(z)\right\}<D_t
    \Longleftrightarrow
    \phi_t(z)<D_t,
    \\
    R_t^{\mathrm{full}}<D_t
    &\Longleftrightarrow
    \phi_t(z)<D_t.
\end{align}
Hence, full-collateral liquidation rule generates
exactly the same repayment shortfall event. Because the two events
coincide, we may write $R_t$ for the asset value at time $t$ of the collateral received by the lender $R_t=R_t^{\mathrm{ben}}$ under the standard liquidation rule and $R_t=R_t^{\mathrm{full}}$ under full-collateral
liquidation rule. Their common repayment shortfall event at time $t$ is
\begin{align}
    \left\{T=t,\ \phi_t(z)<D_t\right\}
    =
    \left\{T=t,\ Y_t>\log c\right\}.
    \label{eq:common_repayment_shortfall_event}
\end{align}
The equality follows from
$Y_t=\log\left(cD_t/\phi_t(z)\right)$. On this common shortfall event,
$R_t=\phi_t(z)<D_t$ under either liquidation rule. If instead
$\phi_t(z)\geq D_t$, then $R_t\geq D_t$ under either rule, although the
realized proceeds may differ across the two rules. Thus, the choice of
liquidation rule may affect the amount recovered when no shortfall
occurs, but it does not affect the repayment shortfall event or its
probability conditional on $T=t$.

The liquidation boundary and the repayment shortfall boundary therefore
divide the state space into three regions:
\begin{align}
    \text{No liquidation; no repayment shortfall:}\quad
    &Y_t\leq0
    \quad\Longleftrightarrow\quad
    \phi_t(z)\geq cD_t>D_t,
    \\
    \text{Liquidation; no repayment shortfall:}\quad
    &0<Y_t\leq\log c
    \quad\Longleftrightarrow\quad
    D_t\leq\phi_t(z)<cD_t,
    \\
    \text{Liquidation with repayment shortfall:}\quad
    &Y_t>\log c
    \quad\Longleftrightarrow\quad
    \phi_t(z)<D_t.
\end{align}
The three-region partition summarizes the distinction between borrower
liquidation and lender repayment shortfall: the middle region permits
liquidation with full repayment. This completes the characterization of
repayment shortfall at every finite liquidation time $T=t$.

Using \eqref{eq:common_repayment_shortfall_event}, together with the
facts that $T=\infty$ does not produce a repayment shortfall and that
the finite first-liquidation events $\{T=t\}$, $t=1,\ldots,d$, are
mutually disjoint, we obtain, for every $s\in\{1,\ldots,d\}$,
\begin{align}
    &\mathbb P(\text{repayment shortfall by time }s)\\
    &=
    \sum_{t=1}^{s}
    \mathbb P\left(
        T=t,\ R_t<D_t
    \right)
    =
    \sum_{t=1}^{s}
    \mathbb P\left(
        T=t,\ Y_t>\log c
    \right).
    \label{eq:lender_shortfall_event_decomposition}
\end{align}

At the first checkpoint, crossing the repayment shortfall boundary
$Y_1>\log c>0$ also triggers the first liquidation. Therefore, the
$t=1$ term in \eqref{eq:lender_shortfall_event_decomposition} is
\begin{align}
    \mathbb P\left(
        T=1,\ Y_1>\log c
    \right)
    &=
    \mathbb P\left(
        Y_1>\log c
    \right)
    \nonumber\\
    &=
    1-
    \Phi_1\left(
        \log c;
        \boldsymbol{\mu}_1^Y,
        \boldsymbol{\Sigma}_1^Y
    \right).
\end{align}

For each $t\in\{2,\ldots,d\}$, the first-passage representation of $T$
gives
\begin{align}
    \mathbb P\left(
        T=t,\ Y_t>\log c
    \right)
    &=
    \mathbb P\left(
        Y_1\leq0,\ldots,Y_{t-1}\leq0,\,
        Y_t>\log c
    \right)
    \nonumber\\
    &=
    \Phi_{t-1}\left(
        \mathbf 0_{t-1};
        \boldsymbol{\mu}_{t-1}^Y,
        \boldsymbol{\Sigma}_{t-1}^Y
    \right)
    -
    \Phi_t\left(
        \left(
        \mathbf 0_{t-1}^{\top},
        \log c
        \right)^{\top};
        \boldsymbol{\mu}_t^Y,
        \boldsymbol{\Sigma}_t^Y
    \right).
\end{align}

Substituting these joint-event probabilities into
\eqref{eq:lender_shortfall_event_decomposition} gives the
conditional lender repayment shortfall probability by time $s$:
\begin{align}
    \begin{aligned}
    &\mathbb P(\text{repayment shortfall by time }s)\\
    &=
    1-
    \Phi_1\left(
        \log c;
        \boldsymbol{\mu}_1^Y,
        \boldsymbol{\Sigma}_1^Y
    \right)
    +
    \sum_{t=2}^{s}
    \Bigg[
    \Phi_{t-1}\left(
        \mathbf 0_{t-1};
        \boldsymbol{\mu}_{t-1}^Y,
        \boldsymbol{\Sigma}_{t-1}^Y
    \right)
    -
    \Phi_t\left(
        \left(
        \mathbf 0_{t-1}^{\top},
        \log c
        \right)^{\top};
        \boldsymbol{\mu}_t^Y,
        \boldsymbol{\Sigma}_t^Y
    \right)
    \Bigg].
    \end{aligned}
\end{align}
The right-hand side of
\eqref{eq:exact_lender_repayment_shortfall_probability_revised}
also agrees with the first-checkpoint result when $s=1$,
since the summation term is then absent. Consequently,
\eqref{eq:exact_lender_repayment_shortfall_probability_revised}
holds for every $s\in\{1,\ldots,d\}$.

\end{proof}

\subsection{Proof of Theorem~\ref{thm:iid_increment_liquidation_probability}}
\label{app:proof_iid_increment_liquidation_probability}

\begin{proof}
Fix $s\in\{1,\ldots,d\}$. Since $y_0<0$, the definitions of
$T$ and $M_s$ give
\begin{align}
    \{T\leq s\}
    =\bigcup_{t=1}^{s}\{Y_t>0\}
    =\left\{\max_{0\leq t\leq s}(Y_t-y_0)>-y_0\right\}
    =\{M_s>-y_0\}.
    \label{eq:non_gaussian_liquidation_maximum_event}
\end{align}
Thus, to compute the liquidation probability, it suffices to determine
the distribution of $M_s$ at $-y_0$.
We first determine the positive-part characteristic function $\psi_t$
for $t=1,\ldots,d$.
Recall that
$Y_t-y_0=\sum_{j=1}^{t}\Delta Y_j$, where the increments are i.i.d.
with characteristic function $f$. By independence, $f(\omega)^t$ is
the characteristic function of $Y_t-y_0$, so Pinelis's positive-part
formula \cite[Eq.~16]{pinelis2015characteristic} applies to
$(Y_t-y_0)_+$. This formula expresses $\psi_t(\omega)$ as
$\frac12[1+f(\omega)^t]$ plus the difference between two transform
values evaluated at $\omega$ and at zero. Using the positive-half-axis
representation following Eq.~(2) in \cite{pinelis2015characteristic},
we express each transform value as an integral and subtract the two
integrals to obtain
\begin{align*}
    \psi_t(\omega)
    =
    \frac{1+f(\omega)^t}{2}
    +
    \frac{1}{2\pi i}\int_0^\infty
    \frac{
        [f(\omega+u)^t-f(\omega-u)^t]
        -
        [f(u)^t-f(-u)^t]
    }{u}\,du.
\end{align*}
The integral above is well defined as an improper Riemann integral, since
Pinelis's Proposition~1 \cite[Proposition~1]{pinelis2015characteristic}
establishes the existence of its defining limit.
This establishes \eqref{eq:positive_partial_sum_cf_four_term}.

To derive \eqref{eq:spitzer_finite_recursion}, we pass from the
positive parts of the individual partial sums to their running maximum.
Since $Y_0-y_0=0$, we have
\begin{align*}
    M_s
    &=\max_{0\leq t\leq s}(Y_t-y_0)\\
    &=\max\left\{0,Y_1-y_0,\ldots,Y_s-y_0\right\}\\
    &=\max\left\{(Y_1-y_0)_+,\ldots,(Y_s-y_0)_+\right\}\\
    &=\max_{1\leq t\leq s}\left\{(Y_t-y_0)_+\right\},
    \qquad s=1,\ldots,d.
\end{align*}
Here, the positive part is taken separately for each partial sum
$Y_t-y_0$ before the maximum is taken.
In the present notation, the partial sums and their positive parts in
\cite{spitzer1956combinatorial} are $Y_t-y_0$ and $(Y_t-y_0)_+$,
respectively, and their running maximum is $M_s$.
Spitzer's Theorem~3.1 \cite[Eq.~(3.5)]{spitzer1956combinatorial}
gives the following generating-function identity for an infinite
i.i.d. sequence with the same increment law. Here we use only its
coefficients through degree $d$, which depend solely on
$\psi_1(\omega),\ldots,\psi_d(\omega)$:
\begin{align}
    \sum_{s=0}^{\infty}\chi_s(\omega)\zeta^s
    =
    \exp\!\left\{
        \sum_{t=1}^{\infty}
        \frac{\psi_t(\omega)}{t}\zeta^t
    \right\},
    \qquad |\zeta|<1.
    \label{eq:spitzer_generating_function}
\end{align}

Because $\chi_s$ and $\psi_t$ are characteristic functions, their
absolute values are bounded by one. The relevant power series therefore
converge absolutely and locally uniformly for $|\zeta|<1$, permitting
term-by-term differentiation and multiplication. Differentiating
\eqref{eq:spitzer_generating_function} with respect to $\zeta$ gives
\begin{align}
    \sum_{s=1}^{\infty}
    s\chi_s(\omega)\zeta^{s-1}
    =
    \exp\!\left\{
        \sum_{t=1}^{\infty}
        \frac{\psi_t(\omega)}{t}\zeta^t
    \right\}
    \sum_{t=1}^{\infty}
    \psi_t(\omega)\zeta^{t-1}
    =
    \left(
        \sum_{p=0}^{\infty}
        \chi_p(\omega)\zeta^p
    \right)
    \left(
        \sum_{t=1}^{\infty}
        \psi_t(\omega)\zeta^{t-1}
    \right).
    \label{eq:spitzer_differentiated_generating_function}
\end{align}

We now want the coefficient of $\zeta^{s-1}$ in the product of the
two series on the right-hand side of
\eqref{eq:spitzer_differentiated_generating_function}.
We find it using the Cauchy product: multiply one term from each
series and add their exponents of $\zeta$. Multiplying
$\chi_p(\omega)\zeta^p$ from the first series by
$\psi_t(\omega)\zeta^{t-1}$ from the second gives
$\chi_p(\omega)\psi_t(\omega)\zeta^{p+t-1}$.
For this term to contribute to the coefficient we want, its exponent
must satisfy $p+t-1=s-1$, or equivalently $p=s-t$.
Since the first series starts at $p=0$ and the second at $t=1$,
the possible values are $t=1,\ldots,s$.
Collecting the terms with the same power of $\zeta$, we obtain
\begin{align}
    &\left(
        \sum_{p=0}^{\infty}
        \chi_p(\omega)\zeta^p
    \right)
    \left(
        \sum_{t=1}^{\infty}
        \psi_t(\omega)\zeta^{t-1}
    \right)
    =
    \sum_{s=1}^{\infty}
    \left(
        \sum_{t=1}^{s}
        \psi_t(\omega)\chi_{s-t}(\omega)
    \right)
    \zeta^{s-1}.
    \label{eq:spitzer_cauchy_product}
\end{align}

We can now compare the coefficients on the two sides. After differentiation, the coefficient of $\zeta^{s-1}$ on the
left-hand side is $s\chi_s(\omega)$. On the right-hand side, it is
$\sum_{t=1}^{s}\psi_t(\omega)\chi_{s-t}(\omega)$. Since the two power series are equal for $|\zeta|<1$, uniqueness of
power-series coefficients gives
\begin{align}
    s\chi_s(\omega)
    =
    \sum_{t=1}^{s}
    \psi_t(\omega)\chi_{s-t}(\omega).
\end{align}
Since $M_0=0$, we have $\chi_0(\omega)=1$ and the discrete convolution recursion
\begin{align*}
    \chi_s(\omega)
    =\frac1s\sum_{t=1}^{s}
    \psi_t(\omega)\chi_{s-t}(\omega),
    \qquad s=1,\ldots,d.
\end{align*}
Thus, we have proved \eqref{eq:spitzer_finite_recursion}.

It remains to prove
\eqref{eq:exact_non_gaussian_liquidation_probability}. Let
$F_s(x)\triangleq\mathbb P(M_s\leq x)$ denote the distribution function
of $M_s$. By the theorem's assumption, $F_s$ is continuous at $-y_0$, and hence
$\mathbb P(M_s=-y_0)=0$. Moreover, since $M_s\geq0$ and $y_0<0$,
$F_s$ is also continuous at $y_0$, with $F_s(y_0)=0$.
Thus neither evaluation point carries probability mass, so we use
the continuity-point form of the Gil-Pelaez inversion formula \cite[p.~481]{gilpelaez1951inversion}. If there were mass at $-y_0$, the discontinuity version would instead
return the average of the two one-sided limits
\cite[p.~482]{gilpelaez1951inversion}. A corresponding formula for the
liquidation probability could then be derived by a similar argument,
but we do not consider that case here.

Under the present assumptions, the inversion formula gives, for
$x\in\{-y_0,y_0\}$,

\begin{align}
    F_s(x)
    =\frac12+\frac{1}{2\pi}\int_0^\infty
    \frac{
        e^{i\omega x}\chi_s(-\omega)
        -
        e^{-i\omega x}\chi_s(\omega)
    }{i\omega}\,d\omega.
    \label{eq:gil_pelaez_source_form}
\end{align}
Apply \eqref{eq:gil_pelaez_source_form} at $x=-y_0$ and $x=y_0$
using these continuity conditions, and subtract the latter identity
from the former. Since $M_s$ is real-valued, its characteristic function satisfies
$\chi_s(-\omega)=\overline{\chi_s(\omega)}$. Hence,
\[
\chi_s(-\omega)+\chi_s(\omega)
=2\operatorname{Re}\chi_s(\omega).
\]
Euler's formula gives
\[
e^{-iy_0\omega}-e^{iy_0\omega}
=-2i\sin(y_0\omega)
=2i\sin(-y_0\omega).
\]
Using these two identities, the difference between the two numerators becomes
\begin{equation}
\begin{aligned}
&\bigl[e^{-iy_0\omega}\chi_s(-\omega)
-e^{iy_0\omega}\chi_s(\omega)\bigr]
-\bigl[e^{iy_0\omega}\chi_s(-\omega)
-e^{-iy_0\omega}\chi_s(\omega)\bigr]\\
&\quad=
\bigl(e^{-iy_0\omega}-e^{iy_0\omega}\bigr)
\bigl(\chi_s(-\omega)+\chi_s(\omega)\bigr)\\
&\quad=
\bigl(2i\sin(-y_0\omega)\bigr)
\bigl(2\operatorname{Re}\chi_s(\omega)\bigr)\\
&\quad=
4i\sin(-y_0\omega)\operatorname{Re}\chi_s(\omega).
\end{aligned}
\label{eq:gil_pelaez_boundary_subtraction}
\end{equation}
It follows that
\begin{align}
    F_s(-y_0)-F_s(y_0)
    =
    \frac{2}{\pi}
    \int_{0}^{\infty}
    \operatorname{Re}\chi_s(\omega)
    \frac{\sin(-y_0\omega)}{\omega}\,d\omega.
\end{align}
Using $F_s(y_0)=0$, which follows from $M_s\geq0$ and $y_0<0$, we obtain
\begin{align}
    F_s(-y_0)
    =
    \frac{2}{\pi}
    \int_{0}^{\infty}
    \operatorname{Re}\chi_s(\omega)
    \frac{\sin(-y_0\omega)}{\omega}\,d\omega.
\end{align}
Finally, because
$\{T\leq s\}=\{M_s>-y_0\}$,
\begin{align}
    \mathbb P(T\leq s)
    =
    1-F_s(-y_0)
    =
    1-\frac{2}{\pi}
    \int_{0}^{\infty}
    \operatorname{Re}\chi_s(\omega)
    \frac{\sin(-y_0\omega)}{\omega}\,d\omega.
\end{align}
This establishes \eqref{eq:exact_non_gaussian_liquidation_probability}
and completes the proof.
\end{proof}

\subsection{Proof of Theorem~\ref{thm:iid_lender_shortfall_exact}}
\label{app:proof_iid_lender_shortfall_exact}

\begin{proof}
Fix $s\in\{1,\ldots,d\}$. Since $Y_t=\log(cD_t/\phi_t(z))$ and
$c>1$, a repayment shortfall at the first liquidation time $t$ occurs
exactly when the loan survives through $t-1$ and $Y_t>\log c$. Hence
\begin{equation*}
 \{T=t,\ \phi_t(z)<D_t\}
 =\{M_{t-1}\le-y_0,\ Y_t-y_0>\log c-y_0\}.
\end{equation*}
The probability of this event is the survival
probability multiplied by the conditional probability that the endpoint
exceeds the shortfall threshold. When $\mathbb P(M_{t-1}\le-y_0)>0$,

\begin{equation}
 \mathbb P(T=t,\ \phi_t(z)<D_t)
 =\mathbb P(M_{t-1}\le-y_0)\,
 \mathbb P(Y_t-y_0>\log c-y_0\mid M_{t-1}\le-y_0).
 \label{eq:lender_daily_probability_factorization}
\end{equation}
The first factor equals $1-\mathbb P(T\le t-1)$ and can therefore
be evaluated using \eqref{eq:exact_non_gaussian_liquidation_probability} in
Theorem~\ref{thm:iid_increment_liquidation_probability}, with value $1$
when $t=1$. To evaluate the second factor in
\eqref{eq:lender_daily_probability_factorization}, we compute the
characteristic function of the endpoint conditional on survival and
then apply Fourier inversion. This conditional
characteristic function is
\begin{equation}
 \mathbb E[e^{i\omega(Y_t-y_0)}\mid M_{t-1}\le-y_0]
 =\frac{\mathbb E[e^{i\omega(Y_t-y_0)}
 \mathbf1_{\{M_{t-1}\le-y_0\}}]}{\mathbb P(M_{t-1}\le-y_0)}.
 \label{eq:lender_conditional_cf}
\end{equation}
The denominator in \eqref{eq:lender_conditional_cf} is the known survival
probability. The numerator averages the endpoint exponential only over
paths satisfying the restriction on the past maximum; evaluating it
therefore requires the dependence between these two random quantities.

The proof follows three linked steps. Step~1 obtains the joint
characteristic function of $M_{t-1}$ and $Y_t-y_0$, which describes the
joint distribution needed in this restricted expectation. Step~2 uses
that function to impose $M_{t-1}\le-y_0$, evaluates the numerator in
\eqref{eq:lender_conditional_cf}, and divides by the known survival
probability to obtain the conditional characteristic function in
\eqref{eq:lender_conditional_cf}. Step~3 applies Fourier inversion to
\eqref{eq:lender_conditional_cf} at $\log c-y_0$ to evaluate the second
factor in \eqref{eq:lender_daily_probability_factorization}.
Multiplying by the survival probability and summing over $t=1,\ldots,s$
then gives the cumulative probability in
\eqref{eq:lender_exact_probability_im}.

\medskip\noindent\textbf{Step 1.}
The numerator in \eqref{eq:lender_conditional_cf} involves $M_{t-1}$
and $Y_t-y_0$. We first determine $\Gamma_{t-1}$, which describes
$(M_{t-1},Y_{t-1}-y_0)$, and then include the independent increment at
$t$ to obtain the joint characteristic function of the required pair.
Petrella and Kou \cite[Lemma~1, Eqs.~(7) and~(9)]{petrella2004numerical}
give the joint recurrence
\begin{align*}
 \Gamma_s(u,\omega)=\frac1s\sum_{t=1}^s
 \Big(
 \mathbb E[e^{i(u+\omega)(Y_t-y_0)_+}]
 +\mathbb E[e^{-i\omega\max\{y_0-Y_t,0\}}]-1
 \Big)\Gamma_{s-t}(u,\omega).
\end{align*}
The positive- and negative-part decomposition used in the proof of
their Lemma~2 yields the identity
\begin{equation*}
 e^{-i\omega\max\{y_0-Y_t,0\}}
 =1+e^{i\omega(Y_t-y_0)}-e^{i\omega(Y_t-y_0)_+}.
\end{equation*}
Taking expectations gives
$\mathbb E[e^{-i\omega\max\{y_0-Y_t,0\}}]
=1+f(\omega)^t-\psi_t(\omega)$, which establishes
\eqref{eq:lender_joint_recursion}.
Independence of $\Delta Y_t$ from the increments through $t-1$ now gives
\begin{align*}
 \mathbb E[e^{iuM_{t-1}+i\omega(Y_t-y_0)}]&=\mathbb E[e^{iuM_{t-1}+i\omega(Y_{t-1}-y_0)}]\mathbb E[e^{i\omega\Delta Y_t}]\\
 &=f(\omega)\Gamma_{t-1}(u,\omega).
\end{align*}
This is the joint transform of the two random quantities appearing in
\eqref{eq:lender_daily_probability_factorization}. Step~2 uses it to retain only the paths
that satisfy the survival restriction.

\medskip\noindent\textbf{Step 2.}
We now compute the expectation in the numerator of
\eqref{eq:lender_conditional_cf}. The same independence
used in Step~1 gives
\begin{align}
 &\mathbb E[e^{i\omega(Y_t-y_0)}\mathbf1_{\{M_{t-1}\le-y_0\}}]
 =f(\omega)\,
 \mathbb E[e^{i\omega(Y_{t-1}-y_0)}\mathbf1_{\{M_{t-1}\le-y_0\}}].
 \label{eq:lender_endpoint_increment}
\end{align}

We evaluate the remaining expectation by following the truncated-integral
argument of Gil-Pelaez \cite{gilpelaez1951inversion}. The identity obtained below can alternatively be derived from classical
Fourier--Stieltjes inversion applied to the distribution of the past
maximum weighted by the endpoint exponential
\cite[Vol.~II, Chapter~XVI, \S~4, Theorem~(4.5)]{zygmund1959trigonometric}. To express its
survival indicator as an integral, we first establish the algebraic
identity for the integrand. Euler's formulas give
\begin{align*}
 e^{iuM_{t-1}}+e^{-iuM_{t-1}}
 &=\bigl[\cos(uM_{t-1})+i\sin(uM_{t-1})\bigr]
  +\bigl[\cos(uM_{t-1})-i\sin(uM_{t-1})\bigr]\\
 &=2\cos(uM_{t-1}).
\end{align*}
The imaginary terms cancel. Multiplying this identity by
$\sin[(-y_0)u]$ therefore gives
\begin{equation*}
 \sin[(-y_0)u]\bigl[e^{iuM_{t-1}}+e^{-iuM_{t-1}}\bigr]
 =2\sin[(-y_0)u]\cos(uM_{t-1}).
\end{equation*}
Next, the sine addition and subtraction formulas give
\begin{align*}
 \sin[(-y_0+M_{t-1})u]
 &=\sin[(-y_0)u]\cos(uM_{t-1})
   +\cos[(-y_0)u]\sin(uM_{t-1}),\\
 \sin[(-y_0-M_{t-1})u]
 &=\sin[(-y_0)u]\cos(uM_{t-1})
   -\cos[(-y_0)u]\sin(uM_{t-1}).
\end{align*}
Adding these equations cancels the terms
$+\cos[(-y_0)u]\sin(uM_{t-1})$ and
$-\cos[(-y_0)u]\sin(uM_{t-1})$. Consequently,
\begin{equation*}
 \sin[(-y_0+M_{t-1})u]+\sin[(-y_0-M_{t-1})u]
 =2\sin[(-y_0)u]\cos(uM_{t-1}).
\end{equation*}
Combining the two identities, dividing by $\pi u$, and integrating
from $0$ to a finite $U>0$, we obtain
\begin{align*}
 &\frac1\pi\int_0^U\frac{\sin[(-y_0)u]}u
   \bigl[e^{iuM_{t-1}}+e^{-iuM_{t-1}}\bigr]\,du\\
 &\qquad=\frac1\pi\int_0^U
   \frac{\sin[(-y_0+M_{t-1})u]}u\,du
   +\frac1\pi\int_0^U
   \frac{\sin[(-y_0-M_{t-1})u]}u\,du.
\end{align*}

We now evaluate the limits of the two sine integrals.
The deterministic identity used in the proof of Gil-Pelaez
\cite[p.~481]{gilpelaez1951inversion} is
\[
    \int_0^\infty \frac{\sin v}{v}\,dv=\frac{\pi}{2}.
\]
We apply this identity one sample path at a time. Once we fix the path, $M_{t-1}$ is a fixed nonnegative real number.
We can then find the two limits by changing variables in each
integral, as follows.

For the first integral, $M_{t-1}\ge0$ and $y_0<0$ imply
$-y_0+M_{t-1}>0$. With
$v=(-y_0+M_{t-1})u$, we have
\[
    u=\frac{v}{-y_0+M_{t-1}},
    \qquad
    du=\frac{dv}{-y_0+M_{t-1}},
    \qquad
    \frac{du}{u}=\frac{dv}{v}.
\]
Consequently, for every $U>0$,
\[
    \frac1\pi\int_0^U
    \frac{\sin[(-y_0+M_{t-1})u]}{u}\,du
    =
    \frac1\pi\int_0^{(-y_0+M_{t-1})U}
    \frac{\sin v}{v}\,dv.
\]
Since $(-y_0+M_{t-1})U\to\infty$ as $U\to\infty$,
the cited identity gives
\[
    \lim_{U\to\infty}\frac1\pi\int_0^U
    \frac{\sin[(-y_0+M_{t-1})u]}{u}\,du
    =
    \frac1\pi\int_0^\infty\frac{\sin v}{v}\,dv
    =\frac12.
\]

For the second integral, the sign of $-y_0-M_{t-1}$
determines the result.

If $M_{t-1}<-y_0$, then $-y_0-M_{t-1}>0$.
Taking $v=(-y_0-M_{t-1})u$ again gives $du/u=dv/v$, and hence
\[
    \frac1\pi\int_0^U
    \frac{\sin[(-y_0-M_{t-1})u]}{u}\,du
    =
    \frac1\pi\int_0^{(-y_0-M_{t-1})U}
    \frac{\sin v}{v}\,dv.
\]
The upper endpoint tends to infinity, so
\[
    \lim_{U\to\infty}\frac1\pi\int_0^U
    \frac{\sin[(-y_0-M_{t-1})u]}{u}\,du
    =
    \frac1\pi\int_0^\infty\frac{\sin v}{v}\,dv
    =\frac12.
\]

If $M_{t-1}>-y_0$, then $M_{t-1}+y_0>0$.
The oddness of the sine function gives
\[
    \sin[(-y_0-M_{t-1})u]
    =
    \sin[-(M_{t-1}+y_0)u]
    =
    -\sin[(M_{t-1}+y_0)u].
\]
Therefore, taking $v=(M_{t-1}+y_0)u$, we obtain
\begin{align*}
    \frac1\pi\int_0^U
    \frac{\sin[(-y_0-M_{t-1})u]}{u}\,du
    &=
    -\frac1\pi\int_0^U
    \frac{\sin[(M_{t-1}+y_0)u]}{u}\,du \\
    &=
    -\frac1\pi\int_0^{(M_{t-1}+y_0)U}
    \frac{\sin v}{v}\,dv.
\end{align*}
Here $(M_{t-1}+y_0)U\to\infty$, and thus
\[
    \lim_{U\to\infty}\frac1\pi\int_0^U
    \frac{\sin[(-y_0-M_{t-1})u]}{u}\,du
    =
    -\frac1\pi\int_0^\infty\frac{\sin v}{v}\,dv
    =-\frac12.
\]

Combining these calculations, for every sample path with
$M_{t-1}\ne-y_0$,
\begin{align*}
    \lim_{U\to\infty}\frac1\pi\int_0^U
    \frac{\sin[(-y_0+M_{t-1})u]}{u}\,du
    &=\frac12,\\
    \lim_{U\to\infty}\frac1\pi\int_0^U
    \frac{\sin[(-y_0-M_{t-1})u]}{u}\,du
    &=
    \begin{cases}
        \frac12, & M_{t-1}<-y_0,\\
        -\frac12, & M_{t-1}>-y_0.
    \end{cases}
\end{align*}

\color{black}

Thus their sum tends to $1$ when $M_{t-1}<-y_0$ and to $0$ when
$M_{t-1}>-y_0$. The boundary event $M_{t-1}=-y_0$ has probability
zero by the continuity assumption; for $t=1$, it is excluded by
$M_0=0<-y_0$. Therefore,
\begin{equation}
\begin{aligned}
 \mathbf1_{\{M_{t-1}\le-y_0\}}
 &=\lim_{U\to\infty}\frac1\pi\int_0^U
 \frac{\sin[(-y_0)u]}u
 \bigl[e^{iuM_{t-1}}+e^{-iuM_{t-1}}\bigr]\,du\\
 &=\frac1\pi\int_0^\infty
 \frac{\sin[(-y_0)u]}u
 \bigl[e^{iuM_{t-1}}+e^{-iuM_{t-1}}\bigr]\,du
 \qquad\text{almost surely}.
\end{aligned}
\label{eq:lender_indicator_inversion}
\end{equation}

Multiplying by $e^{i\omega(Y_{t-1}-y_0)}$ and taking expectations gives
\begin{equation}
\begin{aligned}
 &\mathbb E[e^{i\omega(Y_{t-1}-y_0)}\mathbf1_{\{M_{t-1}\le-y_0\}}]\\
 &=\frac1\pi\mathbb E\!\left[\int_0^\infty\frac{\sin[(-y_0)u]}u e^{i\omega(Y_{t-1}-y_0)}\bigl(e^{iuM_{t-1}}+e^{-iuM_{t-1}}\bigr)\,du\right].
\end{aligned}
\label{eq:lender_indicator_substitution}
\end{equation}
We now evaluate the expectation on the right-hand side of
\eqref{eq:lender_endpoint_increment} by interchanging the expectation
and the improper integral.

For fixed $\omega\in\mathbb R$ and finite $U>0$, the integrand is
jointly measurable: after extending $\sin[(-y_0)u]/u$ continuously to
$-y_0$ at $u=0$, it is a continuous function of $u$ and the random vector
$(M_{t-1},Y_{t-1}-y_0)$. It is also absolutely integrable on the finite
interval, since
\begin{align*}
 &\frac1\pi\mathbb E\!\left[\int_0^U\left|e^{i\omega(Y_{t-1}-y_0)}\frac{\sin[(-y_0)u]}u\bigl(e^{iuM_{t-1}}+e^{-iuM_{t-1}}\bigr)\right|du\right]\\
 &\qquad\le\frac1\pi\int_0^U2(-y_0)\,du=\frac{2(-y_0)U}{\pi}<\infty.
\end{align*}
Here we used $|e^{i\omega(Y_{t-1}-y_0)}|=1$,
$|e^{iuM_{t-1}}+e^{-iuM_{t-1}}|\le2$, and
$|\sin[(-y_0)u]|/u\le-y_0$ for $u>0$.
Fubini's theorem therefore applies on every finite interval.

The deterministic sine-integral estimate in the proof of Pinelis's
Proposition~1 \cite{pinelis2015characteristic} implies
\begin{equation*}
 \left|\frac1\pi\int_0^U\frac{\sin[(-y_0\pm M_{t-1})u]}u\,du\right|\le1
 \qquad\text{for either sign and every }U>0.
\end{equation*}
The preceding trigonometric identity and the triangle inequality give
\begin{equation*}
 \left|\frac1\pi\int_0^U\frac{\sin[(-y_0)u]}u e^{i\omega(Y_{t-1}-y_0)}\bigl(e^{iuM_{t-1}}+e^{-iuM_{t-1}}\bigr)\,du\right|\le1+1=2.
\end{equation*}
These normalized truncated integrals converge almost surely by
\eqref{eq:lender_indicator_inversion}, and the bound is independent of
$U$. Bounded convergence therefore implies convergence of their
expectations. On each finite interval, Fubini's theorem identifies
these expectations with the corresponding integrals of the expected
integrand. Consequently, the improper integral of the expected
integrand exists and equals the expectation of the improper integral in
\eqref{eq:lender_indicator_substitution}. This justifies the first equality below; the remaining
equalities follow from linearity of expectation and
\eqref{eq:lender_joint_cf}:
\begin{align}
 &\frac1\pi\mathbb E\!\left[\int_0^\infty\frac{\sin[(-y_0)u]}u e^{i\omega(Y_{t-1}-y_0)}\bigl(e^{iuM_{t-1}}+e^{-iuM_{t-1}}\bigr)\,du\right]\nonumber\\
 &=\frac1\pi\int_0^\infty\frac{\sin[(-y_0)u]}u\mathbb E\!\left[e^{i\omega(Y_{t-1}-y_0)}\bigl(e^{iuM_{t-1}}+e^{-iuM_{t-1}}\bigr)\right]du\nonumber\\
 &=\frac1\pi\int_0^\infty\frac{\sin[(-y_0)u]}u\left[\mathbb E[e^{iuM_{t-1}+i\omega(Y_{t-1}-y_0)}]+\mathbb E[e^{-iuM_{t-1}+i\omega(Y_{t-1}-y_0)}]\right]du\nonumber\\
 &=\frac1\pi\int_0^\infty\frac{\sin[(-y_0)u]}u[\Gamma_{t-1}(u,\omega)+\Gamma_{t-1}(-u,\omega)]\,du.
 \label{eq:lender_improper_expectation_interchange}
\end{align}
The inner integral thus converges for every real $\omega$.

Combining \eqref{eq:lender_indicator_substitution} and
\eqref{eq:lender_improper_expectation_interchange} with
\eqref{eq:lender_endpoint_increment} gives the required numerator:
\begin{equation}
 \mathbb E[e^{i\omega(Y_t-y_0)}\mathbf1_{\{M_{t-1}\le-y_0\}}]=\frac{f(\omega)}\pi\int_0^\infty\frac{\sin[(-y_0)u]}u[\Gamma_{t-1}(u,\omega)+\Gamma_{t-1}(-u,\omega)]\,du.
 \label{eq:lender_survival_weighted_cf}
\end{equation}

Dividing by the known survival probability in
\eqref{eq:lender_conditional_cf} therefore gives, whenever that
probability is positive,
\begin{align*}
 &\mathbb E[e^{i\omega(Y_t-y_0)}\mid M_{t-1}\le-y_0]\\
 &\qquad=\frac{f(\omega)}{\pi\mathbb P(M_{t-1}\le-y_0)}\int_0^\infty\frac{\sin[(-y_0)u]}u[\Gamma_{t-1}(u,\omega)+\Gamma_{t-1}(-u,\omega)]\,du.
\end{align*}
This determines the conditional characteristic function in
\eqref{eq:lender_conditional_cf}. Step~3 uses it to evaluate the second
factor in \eqref{eq:lender_daily_probability_factorization}.

\medskip\noindent\textbf{Step 3.}
To evaluate the second factor in
\eqref{eq:lender_daily_probability_factorization}, we apply Fourier
inversion at $\log c-y_0$ to the conditional characteristic function in
\eqref{eq:lender_conditional_cf}, as obtained in Step~2.
If $\mathbb P(M_{t-1}\le-y_0)>0$, the
conditional endpoint distribution is continuous at $\log c-y_0$.
Indeed,
\begin{equation*}
 \{M_{t-1}\le-y_0,\ Y_t-y_0=\log c-y_0\}
 \subseteq\{M_t=\log c-y_0\},
\end{equation*}
and the latter event has probability zero by assumption. 

Applying the inversion formula of Gil-Pelaez
\cite[p.~481]{gilpelaez1951inversion}
to the conditional characteristic function in
\eqref{eq:lender_conditional_cf}, and taking the complementary
probability at $\log c-y_0$, gives
\begin{align*}
    &\mathbb P\bigl(
        Y_t-y_0>\log c-y_0
        \mid M_{t-1}\le-y_0
    \bigr) \\
    &\quad=
    \frac12+
    \frac1\pi\int_0^\infty
    \frac1\omega
    \operatorname{Im}\!\left\{
        e^{-i\omega(\log c-y_0)}
        \mathbb E\!\left[
            e^{i\omega(Y_t-y_0)}
            \mid M_{t-1}\le-y_0
        \right]
    \right\}\,d\omega.
\end{align*}

Substituting \eqref{eq:lender_conditional_cf} into this expression yields
\begin{align*}
    &\mathbb P\bigl(
        Y_t-y_0>\log c-y_0
        \mid M_{t-1}\le-y_0
    \bigr) \\
    &\quad=
    \frac12+
    \frac1\pi\int_0^\infty
    \frac1\omega
    \operatorname{Im}\!\left\{
        e^{-i\omega(\log c-y_0)}
        \frac{
            \mathbb E\!\left[
                e^{i\omega(Y_t-y_0)}
                \mathbf1_{\{M_{t-1}\le-y_0\}}
            \right]
        }{
            \mathbb P(M_{t-1}\le-y_0)
        }
    \right\}\,d\omega.
\end{align*}

Notice that the survival probability
$\mathbb P(M_{t-1}\le-y_0)$ does not change as we integrate over
$\omega$. Here it is a positive real number, so we can take its
reciprocal outside both the imaginary part and the integral:
\begin{align*}
    &\mathbb P\bigl(
        Y_t-y_0>\log c-y_0
        \mid M_{t-1}\le-y_0
    \bigr) \\
    &\quad=
    \frac12+
    \frac{1}{\pi\,\mathbb P(M_{t-1}\le-y_0)}
    \int_0^\infty
    \frac1\omega
    \operatorname{Im}\!\left\{
        e^{-i\omega(\log c-y_0)}
        \mathbb E\!\left[
            e^{i\omega(Y_t-y_0)}
            \mathbf1_{\{M_{t-1}\le-y_0\}}
        \right]
    \right\}\,d\omega.
\end{align*}
Multiplying by $\mathbb P(M_{t-1}\le-y_0)$, as required by
\eqref{eq:lender_daily_probability_factorization}, gives
\begin{align*}
    &\mathbb P(T=t,\ \phi_t(z)<D_t)
    =\frac12\,\mathbb P(M_{t-1}\le-y_0) \\
    &\quad+
    \frac{\mathbb P(M_{t-1}\le-y_0)}
         {\pi\,\mathbb P(M_{t-1}\le-y_0)}
    \int_0^\infty
    \frac1\omega
    \operatorname{Im}\!\left\{
        e^{-i\omega(\log c-y_0)}
        \mathbb E\!\left[
            e^{i\omega(Y_t-y_0)}
            \mathbf1_{\{M_{t-1}\le-y_0\}}
        \right]
    \right\}\,d\omega.
\end{align*}

Cancelling the common factor in the coefficient of the integral,
we obtain
\begin{align}
    &\mathbb P(T=t,\ \phi_t(z)<D_t)
    =\frac12\,\mathbb P(M_{t-1}\le-y_0)
    \nonumber\\
    &\quad+
    \frac1\pi\int_0^\infty
    \frac1\omega
    \operatorname{Im}\!\left\{
        e^{-i\omega(\log c-y_0)}
        \mathbb E\!\left[
            e^{i\omega(Y_t-y_0)}
            \mathbf1_{\{M_{t-1}\le-y_0\}}
        \right]
    \right\}\,d\omega.
    \label{eq:lender_daily_inversion}
\end{align}
\color{black}
If the survival probability is zero, the daily shortfall probability
and the weighted expectation are both zero, so the last expression
remains valid without conditioning on a zero-probability event.

Using \eqref{eq:exact_non_gaussian_liquidation_probability} in
Theorem~\ref{thm:iid_increment_liquidation_probability}, we obtain
\begin{equation*}
 \mathbb P(M_{t-1}\le-y_0)
 =\frac2\pi\int_0^\infty\operatorname{Re}\chi_{t-1}(u)
 \frac{\sin[(-y_0)u]}u\,du.
\end{equation*}
For $t=1$, the same identity holds with $\chi_0=1$ by the Dirichlet
integral. Substituting this identity and
\eqref{eq:lender_survival_weighted_cf} into
\eqref{eq:lender_daily_inversion}, and summing over the disjoint events
$\{T=t,\phi_t(z)<D_t\}$, proves
\eqref{eq:lender_exact_probability_im}.

\end{proof}
\section{Lending pools}\label{sec:pool}

In order to compare with our proposed protocol, we present a precise model of lending pools, such as Aave and Compound.

\subsection{Tokens, wallets, oracle}
First, we assume that there are two base tokens of interest: \textsc{col} and \textsc{liq}, along with the corresponding credit and debit tokens.
In addition to these two base tokens, we also introduce corresponding credit-tokens\footnote{We partially follow the notation from \cite{bl25}.} \(0^c\) and \(1^c\) and debit-tokens \(0^d\) and \(1^d\).
These tokens and their balances are controlled by the lending pool alone: wallets cannot send them to other wallets.

We let time steps be denoted \(t=1,2,\ldots\).
The set of participants are a set of wallets \(\mathbb W\). We abstract away the cryptographic notions behind wallets, and denote the content of a wallet \(w\) at time \(t\) by
\begin{align}
    w_t = (w_t[0], w_t[0^c], w_t[0^d], w_t[1], w_t[1^c], w_t[1^d]) \in \mathbb R_+^6.
\end{align}
Here, \(w_t[0]\) denotes the \textsc{col} balance, and 
\(w_t[1]\) denotes the \textsc{liq} balance;
\(w_t[0^c]\) and \(w_t[1^c]\) denote the credits in \textsc{col} and \textsc{liq}, and \(w_t[0^d]\) and \(w_t[1^d]\) denote the debits in \textsc{col} and \textsc{liq}.
Credits represent collateral that is deposited into a lending pool, whereas debits track amounts borrowed.

We also define the pool reserves at time \(t\) as the vector:
\begin{align}
    \Lambda_t = (\Lambda_t[0], \Lambda_t[0^c], \Lambda_t[0^d], \Lambda_t[1], \Lambda_t[1^c], \Lambda_t[1^d]) \in \mathbb R_+^6.
\end{align}

We assume that there exists a price oracle that 
takes as input a snapshot \(\Xi_t \in \mathcal X\) of the external environment at time \(t\) (e.g., information from various DEXs and CEXs) and outputs a vector of prices, one for each token:
\begin{align}
    \pi_t = \pi(\Xi_t) \in \mathbb R_+^6.
\end{align}
For a token \(x\), we write \(\frac{\pi_t[x^c]}{\pi_t[x]}\) for the slippage-free exchange rate from the corresponding credit token \(x^c\) to the token \(x\). Equivalently, \(\frac{\pi_t[x]}{\pi_t[x^c]}\) is the slippage-free exchange rate from the token \(x\) to the corresponding credit token \(x^c\). Similarly, \(\frac{\pi_t[x^d]}{\pi_t[x]}\) is the exchange rate from the debit token \(x^d\) to the token \(x\), and \(\frac{\pi_t[x]}{\pi_t[x^d]}\) is the exchange rate from the token \(x\) to the debit token \(x^d\).

\subsection{State}
The state of the lending pool at time $t$ is composed of the balances of all the wallets participating in the lending pool, the pool reserve $\Lambda_t$, and the price vector $\pi(\Xi_t)$. Assuming that there is a constant number \(n\) of wallets participating, then the state at time \(t\) is
\begin{align}
    S_t^{\mathrm{pool}} \triangleq (w_t^{(1)}, \ldots, w_t^{(n)}, \Lambda_t, \pi(\Xi_t)).
\end{align}

Following the distinction between wallet balances and pool reserves, we include $\Lambda_t$ explicitly in the state, since the pool reserve determines the available liquidity and cannot be replaced by the ordinary token balances held in user wallets.

Initially, all credit- and debit-balances are zero.

\subsection{Actions}

Next, we describe the actions available to the wallets at each time step.
The following actions are available to wallets at time \(t\).
\begin{enumerate}
    \item Deposit: a wallet \(w\) sends an amount \(v\) of \(x\)-token into the pool and receives \(\frac{\pi_t[x]}{\pi_t[x^c]}v\) corresponding \(x^c\)-credit-tokens. If the value \(v\) is allowed, this results in the following state variable updates:
    \begin{align}
        w_{t+1}[x] &= w_t[x] - v,\\
        w_{t+1}[x^c] &= w_t[x^c] + \frac{\pi_t[x]}{\pi_t[x^c]}v.
    \end{align}
    The corresponding pool-side reserve update is
    \begin{align}
        \Lambda_{t+1}[x] &= \Lambda_t[x] + v.
    \end{align}
    Observe that this deposit assumes that there is no slippage in converting a token \(x\) to its \(x^c\)-credit-token.

    \item Borrow: a wallet \(w\) receives \(v\) units of \(x\)-token from the pool and is assigned a corresponding amount of \(x^d\)-debit-token.
    If the value \(v\) is allowed (enough \(x\)-token in the pool and wallet has sufficient collateral), this results in the following state variable updates:
    \begin{align}
        w_{t+1}[x] &= w_t[x] + v,\\
        w_{t+1}[x^d] &= w_t[x^d] + \frac{\pi_t[x]}{\pi_t[x^d]}v.
    \end{align}
    The corresponding pool-side reserve update is
    \begin{align}
        \Lambda_{t+1}[x] &= \Lambda_t[x] - v.
    \end{align}

    \item Repay: a wallet \(w\) sends \(v\) units of \(x\)-token and reduces its borrowing balance by the corresponding amount of \(x^d\)-debit-token.
    If the value \(v\) is allowed, this results in the following state variable updates:
    \begin{align}
        w_{t+1}[x] &= w_t[x] - v,\\
        w_{t+1}[x^d] &= w_t[x^d] - \frac{\pi_t[x]}{\pi_t[x^d]}v.
    \end{align}
    The corresponding pool-side reserve update is
    \begin{align}
        \Lambda_{t+1}[x] &= \Lambda_t[x] + v.
    \end{align}

    \item Redeem: a wallet \(w\) sends \(v^c\) units of \(x^c\)-credit-token and receives back the corresponding amount of \(x\)-token.
    If the value \(v^c\) is allowed, this results in the following state variable updates:
    \begin{align}
        w_{t+1}[x] &= w_t[x] + \frac{\pi_t[x^c]}{\pi_t[x]}v^c,\\
        w_{t+1}[x^c] &= w_t[x^c] - v^c.
    \end{align}
    The corresponding pool-side reserve update is
    \begin{align}
        \Lambda_{t+1}[x] &= \Lambda_t[x] - \frac{\pi_t[x^c]}{\pi_t[x]}v^c.
    \end{align}

    \item Liquidate: if the health factor (cf. \cite{bl25}) of a borrower \(m\) falls below 1, then a wallet \(w\) can liquidate an amount \(v\) of \(m\)'s \(x\)-loan and receive a corresponding amount of \(y^c\)-credit-tokens from \(m\). This corresponding amount is usually scaled by a factor \(\eta>1\) to incentivize liquidation and dis-incentivize borrowers from not putting enough collateral into the pool. If the value \(v\) is allowed, this results in the following state variable updates:
    \begin{align}
        w_{t+1}[x] &= w_t[x] - v,\quad (\mbox{repay \(v\) on behalf}),\\
        w_{t+1}[y^c] &= w_t[y^c] + \eta \frac{\pi_t[x]}{\pi_t[y^c]}v, \quad (\mbox{liquidation reward}),\\
        m_{t+1}[x^d] &= m_t[x^d] - \frac{\pi_t[x]}{\pi_t[x^d]} v,\quad (\mbox{reduce debit balance}),\\
        m_{t+1}[y^c] &= m_t[y^c] - \eta \frac{\pi_t[x]}{\pi_t[y^c]}v, \quad (\mbox{loss of collateral}).
    \end{align}
    The corresponding pool-side reserve update is
    \begin{align}
        \Lambda_{t+1}[x] &= \Lambda_t[x] + v.
    \end{align}

    \item Sending and receiving: a wallet can send and receive tokens \(0\) and \(1\) from another wallet, but cannot send credit- and debit-tokens. The balances are updated correspondingly.
\end{enumerate}

\begin{remark}[Blockchain network fees]
So far, for simplicity, we have omitted network transaction fees (e.g., gas) that are typically associated with each action taken by a wallet. These fees can be modeled as additive random variables.    
\end{remark}

\subsection{State transitions}

We are now in a position to describe how the state evolves from one time to another, subject to the actions of wallets.
In addition to the effect of actions on updating state variables, the state transition also keeps track of interest accumulation as follows.
For each token $x$, the lending pool calculates an interest rate from the utilization of the pool reserve. Let
\begin{align}
    \mathrm{Debt}_t[x]
    &\triangleq
    \sum_{w \in \mathbb W}
    \frac{\pi_t[x^d]}{\pi_t[x]} w_t[x^d],
\end{align}
denote the total outstanding debt denominated in units of the base token $x$ at time $t$. The utilization rate of the pool for token $x$ is then
\begin{align}
    U_t[x]
    &\triangleq
    \frac{\mathrm{Debt}_t[x]}{\Lambda_t[x]+\mathrm{Debt}_t[x]}.
\end{align}
A central decision-maker (e.g., Chaos Lab for Aave \cite{chaos}) determines coefficients $\alpha$ and $\beta$, and sets the interest rate to
\begin{align}
    I_t[x]
    &\triangleq
    \alpha U_t[x]+\beta.
\end{align}
The lending pool then updates credit- and debit-balances for every wallet $w$:
\begin{align}
    w_{t+1}[x^c] &= (1+I_t[x]) \cdot w_t[x^c],\\
    w_{t+1}[x^d] &= (1+I_t[x]) \cdot w_t[x^d].
\end{align}

\end{document}